    \DeclareMathOperator{\tr}{tr}
    \DeclareMathOperator{\diag}{diag}
    \DeclareMathOperator{\rank}{rank}
    \DeclareMathOperator*{\argmax}{arg\,max}
\newtheorem{theorem}{Theorem}
\newtheorem{lemma}{Lemma}
\newtheorem{proposition}{Proposition}
\newtheorem{remark}{Remark}
\begin{document}

\title{
The MIMO-ME-MS Channel: \\Analysis and Algorithm for Secure MIMO Integrated Sensing and Communications
}

\author{Seongkyu~Jung, Namyoon~Lee, and Jeonghun~Park

\thanks{ 
This work was supported by Institute of Information \& communications Technology Planning \& Evaluation (IITP) grant funded by the Korea government (MSIT) (No. RS-2024-00397216, Development of the Upper-mid Band Extreme massive MIMO (E-MIMO)), in part by the IITP grant funded by the Korea government (MSIT) (No. RS-2024-00395824, Development of Cloud virtualized RAN (vRAN) system supporting upper-midband), and in part by the IITP under the 6G Cloud Research and Education Open Hub (IITP-2025-RS-2024-00428780) grant funded by the Korea government (MSIT).}
\thanks{
S. Jung and J. Park are with the School of Electrical and Electronic Engineering, Yonsei University, Seoul 03722, South Korea (e-mail: wjdtjd963@yonsei.ac.kr; jhpark@yonsei.ac.kr). N. Lee is with the Department of Electrical Engineering, POSTECH, Pohang 37673, South Korea (e-mail: nylee@postech.ac.kr).}
} 

\maketitle

\begin{abstract}
This paper addresses precoder design for secure multiple-input multiple-output (MIMO) integrated sensing and communications (ISAC) systems. We introduce a MIMO channel with a multiple-antenna eavesdropper and a multiple-antenna sensing receiver (MIMO-ME-MS) and analyze the fundamental performance limits of this tripartite tradeoff. Using sensing mutual information, we formulate the precoder design as a nonconvex weighted sum rate maximization problem. A high signal-to-noise ratio analysis based on a subspace decomposition characterizes the maximum weighted degrees of freedom. This analysis reveals the structure of a quasi-optimal precoder that must span the ``useful subspace'' and demonstrates the inadequacy of extending known schemes from simpler wiretap or ISAC channels. To solve this nonconvex problem, we develop a practical two-stage iterative algorithm that alternates between a sequential basis construction stage and a power allocation stage that solves the resulting difference-of-convex program. We demonstrate that the proposed method captures the desirable precoder structure identified in our analysis and achieves substantial performance gains in the MIMO-ME-MS channel.
\end{abstract}

\begin{IEEEkeywords}
MIMO, integrated sensing and communications, secrecy rate, sensing mutual information, Pareto boundary. 
\end{IEEEkeywords}
\section{Introduction}
A leading trend in next-generation wireless systems is the integration of sensing functionality into conventional communication infrastructure, an approach commonly termed integrated sensing and communications (ISAC). 
In particular, multiple-input multiple-output (MIMO) technology enables ISAC systems to exploit their abundant spatial degrees of freedom (DoF) to serve communication and sensing functions simultaneously.
By jointly harnessing the broadcast nature of the wireless channel and MIMO's beamforming capability, the transmitter can judiciously reuse a single waveform for both data transmission and target probing. This joint use can achieve synergistic gains in communication mutual information (MI) and sensing metrics such as the Cram\'er--Rao lower bound (CRLB) \cite{CaireFTISAC2023,fanDRTisit23}.

The broadcast nature of wireless propagation, however, also exposes transmissions to eavesdropping threats. 
A passive eavesdropper, possibly equipped with multiple antennas, can capture any signal transmitted within the transmitter's coverage area. 
A standard information-theoretic countermeasure is physical-layer security (PLS). 
Wyner's seminal work \cite{wynerWiretap75} showed that a transmitter can deliver a confidential message to a legitimate user at a positive secrecy rate, 
which is characterized by the difference between the MI at the legitimate user and that at the eavesdropper. 
Accordingly, if the legitimate user's channel is stronger than the eavesdropper's, the secrecy rate is strictly positive.
In MIMO settings, the effective channel is shaped by the transmit precoders, making the characterization and optimization of the secrecy rate nontrivial. 
Addressing this, the secrecy capacity of the MIMO wiretap channel was fully characterized in \cite{KhistiMIMOME2010}, which showed that generalized singular value decomposition (GSVD)-based precoding achieves this capacity in the high signal-to-noise ratio (SNR) regime.
Nonetheless, prior work on PLS \cite{wynerWiretap75, KhistiMIMOME2010} considered only two types of receivers, i.e., a legitimate user and an eavesdropper, without accounting for sensing functionality. As a result, the existing framework is insufficient for understanding secure MIMO ISAC systems.

In this paper, we explore the MIMO-ME-MS channel---a MIMO channel with a multiple-antenna eavesdropper and a multiple-antenna sensing receiver---which extends the MIMO-ME channel \cite{KhistiMIMOME2010} by incorporating sensing functionality. Using the concept of sensing mutual information (SMI) \cite{smicommmag23} as the sensing metric, we characterize optimality conditions and propose a quasi-optimal transmission method in the high-SNR regime. 
Additionally, we develop a practical method for precoding basis design and power allocation to support joint communication, sensing, and secrecy. 

\subsection{Related Work}

The literature contains extensive prior work on secure MIMO and MIMO ISAC systems, although these two areas are largely treated separately. 
A key result in secure MIMO communication is provided in \cite{KhistiMIMOME2010}, which characterizes the secrecy capacity of the MIMO wiretap channel and establishes the high-SNR optimality of GSVD-based precoding. Complementing this, the same capacity expression is derived in \cite{hassibiWiretap11} using a different proof technique (a saddle-point characterization in \cite{KhistiMIMOME2010} versus a single convex optimization in \cite{hassibiWiretap11}). 
In \cite{choiColluding}, a scenario with cooperating eavesdroppers is considered, and an iterative precoding algorithm is developed to maximize the sum secrecy rate based on a generalized power-iteration approach \cite{parkGPIRS23}.
In \cite{hiaTcom22, zhang:tvt:19}, hierarchical PLS is introduced, where higher-tier users can decode messages intended for lower-tier users, but not vice versa.
In \cite{salemRSMA23, xiatwc24}, precoder optimization methods are proposed to enhance the secrecy rate by employing rate-splitting multiple access \cite{parkNet}. 
A comprehensive survey of PLS is presented in \cite{wuSurvey18}. 

For MIMO ISAC systems, the fundamental tradeoff between communication MI and the sensing CRLB is characterized in \cite{CaireFTISAC2023, mimoCRBTWC24}.
For MIMO ISAC precoder optimization, semidefinite programming (SDP)-based methods are developed in \cite{liuSDPtsp20, liuFanCRLBopt22} to enhance the sensing accuracy while satisfying prescribed signal-to-interference-plus-noise ratio (SINR) constraints.
In \cite{choiGPIisac24}, a generalized power iteration-based precoding method is proposed to maximize the achievable sum rate subject to beampattern mean-squared error (MSE) constraints.
In \cite{kim:twc:26}, considering frequency-division duplexing, a downlink channel-reconstruction technique \cite{Darktwc25} is studied in the ISAC context. 
One key challenge in studying MIMO ISAC lies in the disparity between communication metrics (e.g., MI) and sensing metrics (e.g., CRLB and beampattern MSE). 
To address this, recent work has employed SMI as a sensing metric, inspired by the use of MI as an information-theoretic performance metric for sensing \cite{bell:tit:93, tang:tsp:10, tang:tsp:19}. 
For instance, in \cite{wangUnifiedISACPareto2024}, the relationship between SMI and minimum mean-square error (MMSE) on the ISAC Pareto boundary is investigated, and a weighted MMSE (WMMSE)-based precoding algorithm is developed.
In \cite{shin:arxiv:25}, greedy radio-frequency chain selection methods are devised by leveraging a unified MI-based ISAC performance characterization. 

Recent literature on secure MIMO ISAC has extensively investigated the joint design of communication and sensing signals to balance secrecy requirements with sensing performance. A prevalent approach involves the simultaneous optimization of information and artificial noise signals to maximize the secrecy rate while satisfying sensing constraints such as beampattern error, radar SINR, or CRLB \cite{su:twc:21, jia:tvt:24, ren:tcom:23}.
To address practical challenges, studies have also accounted for eavesdropper channel state information (CSI) uncertainty by employing robust optimization to guarantee worst-case secrecy performance \cite{ren:tcom:23, jia:tvt:24}, or have leveraged sensing functionality itself to localize unknown eavesdroppers for enhanced security \cite{su:twc:24}. 
Furthermore, data-driven frameworks based on deep learning have emerged as an alternative approach to precoder design \cite{li:commlett:24}.

The technical core of these existing designs relies primarily on sophisticated numerical optimization frameworks. 
Specifically, nonconvex secrecy-oriented problems are typically tackled by
reformulating them into tractable forms via techniques such as semidefinite relaxation (SDR) of rank-one constraints \cite{chu:tvt:23, li:tvt:25}, successive convex approximation (SCA) based on first-order Taylor expansions \cite{he:twc:24, li:tvt:25}, and the S-procedure for handling bounded CSI errors \cite{ren:tcom:23, jia:tvt:24}.
While these methods can produce high-performance beamforming solutions via interior-point methods or standard solvers like CVX, they are limited in revealing and incorporating the structural insights of the optimal precoder design.

Consequently, these approaches provide limited insight into a fundamental question: how an optimal secure ISAC precoder should be structured relative to the tripartite relationship among communication, sensing, and eavesdropper's channel subspaces. Moreover, the fundamental limits of this tradeoff in terms of DoF remain largely uncharacterized. In particular, this lack of structural understanding obscures whether and how known optimal designs for simpler channels (e.g., GSVD-based precoding for the MIMO wiretap channel \cite{KhistiMIMOME2010}) can be extended to secure MIMO ISAC systems.

To bridge this fundamental gap, we provide a rigorous analysis of the MIMO-ME-MS channel that reveals its intrinsic performance limits and the structure of a quasi-optimal precoder. Leveraging this structural insight, we propose a practical precoding algorithm for the MIMO-ME-MS channel. 

\subsection{Contributions}
The main contributions of this paper are summarized as follows: 
\begin{itemize}
    \item \textbf{Unified analytical framework for the MIMO-ME-MS channel:} 
    By adopting SMI as an information-theoretic sensing metric, we introduce the MIMO-ME-MS channel as a tractable model for systems with simultaneous secure communication and sensing requirements. 
    On this basis, we formulate a unified precoder design problem that captures the fundamental tradeoffs among secrecy, communication, and sensing performance within a single weighted sum rate maximization framework.

   \item \textbf{Characterization of the DoF-optimal precoder structure:} 
    We present a rigorous high-SNR analysis to characterize the structure of a quasi-optimal precoder for the MIMO-ME-MS channel. By decomposing the transmit space into eight subspaces, we derive the maximum achievable weighted DoF. 
    This analysis reveals that a DoF-optimal precoder must exclusively span a ``useful subspace,'' whose composition depends on the system weights, and shows that naive extensions of known precoding methods are strictly suboptimal.

    \item \textbf{Two-stage precoding design:} 
    Building on these insights, we propose a practical two-stage iterative algorithm to solve the precoder design problem. The algorithm alternates between (i) sequential basis construction, which maximizes the marginal rate gain at each step, and (ii) power allocation, which solves the resulting difference-of-convex (DC) program. 
    We show that the algorithm's behavior aligns with the asymptotically optimal structure identified by our analysis.
    This theoretical soundness is further validated through numerical simulations, which demonstrate superior performance over baseline schemes across all SNR regimes.
\end{itemize}

The rest of this paper is organized as follows. Section~\ref{sec:system_model} presents the system model for the MIMO-ME-MS channel. Section~\ref{sec:analysis} conducts a theoretical analysis of the problem and characterizes the optimal precoder structure in the high-SNR regime. Building on these theoretical foundations, Section~\ref{sec:algorithm} details the proposed two-stage iterative precoding algorithm and establishes its asymptotic optimality. Section~\ref{sec:simulations} validates the performance of the proposed design through numerical simulations, and Section~\ref{sec:conclusion} concludes the paper.

\textbf{Notation}: Scalars are denoted by italic letters (e.g., $a$), vectors by bold lowercase letters (e.g., $\mathbf{x}$), matrices by bold uppercase letters (e.g., $\mathbf{X}$), and subspaces by calligraphic letters (e.g., $\CMcal{S}$). 
The superscripts $(\cdot)^T$, $(\cdot)^H$, and $(\cdot)^{\dagger}$ denote the transpose, Hermitian (conjugate) transpose, and pseudoinverse, respectively. 
The Euclidean norm of a vector $\mathbf{x}$ is denoted by $\|\mathbf{x}\|_2$.
The trace and rank of a matrix $\mathbf{X}$ are denoted by $\tr(\mathbf{X})$ and $\rank(\mathbf{X})$, respectively. 
The identity matrix is denoted by $\mathbf{I}$, and $\diag(x_1,\dots,x_n)$ 
denotes the diagonal matrix with diagonal entries $x_1,\dots,x_n$; 
equivalently, $\diag(\mathbf{x})$ denotes the diagonal matrix formed 
from the entries of a vector $\mathbf{x}$.
The set of $m \times n$ complex matrices is denoted by $\mathbb{C}^{m\times n}$, and $\mathbb{R}_+$ denotes the set of nonnegative real numbers.
The column space and null space of a matrix $\mathbf{X}$ are denoted by $\CMcal{C}(\mathbf{X})$ and $\CMcal{N}(\mathbf{X})$, respectively. 
The direct sum of subspaces is denoted by $\oplus$, and $\bigoplus_{i=1}^{n}\CMcal{S}_i$ denotes the direct sum of subspaces $\CMcal{S}_1, \dots, \CMcal{S}_n$.
Moreover, $\mathcal{CN}(\mathbf{0},\mathbf{R})$ denotes the circularly symmetric complex Gaussian distribution with zero mean and covariance matrix $\mathbf{R}$, and $[x]^+$ denotes the positive-part operator defined as $\max\{0,x\}$.
Unless otherwise stated, $\log(\cdot)$ denotes the base-2 logarithm.
We use $O(\cdot)$ and $o(\cdot)$ to denote the big-$O$ and little-$o$ notations, respectively.
Specifically, $f(n) = O(g(n))$ implies $\limsup_{n \to \infty} |f(n)/g(n)| < \infty$, and $f(n) = o(g(n))$ implies $\lim_{n \to \infty} f(n)/g(n) = 0$.

\section{System Model} \label{sec:system_model}
We consider a MIMO-ME-MS channel comprising a transmitter (TX) with $n_t$ antennas that serves a legitimate receiver (RX) with $n_c$ antennas in the presence of a passive eavesdropper with $n_e$ antennas. Simultaneously, the TX uses the same waveform to perform target sensing, and the reflected signals are captured by a sensing receiver with $n_s$ antennas.

\subsection{Transmit Signal Model}
Let $N_s$ denote the number of transmitted data streams, and $\mathbf{S}\in\mathbb{C}^{N_s\times T}$ denote the data-symbol matrix, where $T$ is the blocklength. 
We adopt a block-fading model assuming that the transmission duration $T$ is within the coherent sensing period \cite{CaireFTISAC2023}, which implies that the responses of all involved channels remain constant during each block. The entries of $\mathbf{S}$ are independent and identically distributed (i.i.d.) as $\mathcal{CN}(0,1)$, satisfying $\mathbb{E}[\mathbf{S}\mathbf{S}^H] = T \mathbf{I}$.
The TX employs a linear precoding matrix $\mathbf{F}\in\mathbb{C}^{n_t\times N_s}$ to produce the transmitted signal matrix:
\begin{align}
    \mathbf{X}= \mathbf{F}\,\mathbf{S}\in\mathbb{C}^{n_t\times T}.
\end{align}
The transmission is subject to a total average transmit power constraint $P_{\mathrm{tot}}$, which is expressed as:
\begin{align}
    \frac{1}{T}\mathbb{E}\bigl[\tr(\mathbf{X}\mathbf{X}^H)\bigr]
    = \frac{1}{T}\tr(\mathbf{F}\,\mathbb{E}[\mathbf{S}\mathbf{S}^H]\mathbf{F}^{H}) = \tr(\mathbf{F}\mathbf{F}^{H}) \le P_{\mathrm{tot}}.
\end{align}

\subsection{Communication Model}
Let $\bar{\mathbf{H}}_c\in\mathbb{C}^{n_c\times n_t}$ denote the channel matrix from the TX to the legitimate RX. The received signal at the RX is given by:
\begin{align}
    \mathbf{Y}_c = \bar{\mathbf{H}}_c\,\mathbf{X}+\mathbf{Z}_c
    = \bar{\mathbf{H}}_c\,\mathbf{F}\mathbf{S}+\mathbf{Z}_c,
\end{align}
where $\mathbf{Z}_c\in\mathbb{C}^{n_c\times T}$ is an additive white Gaussian noise (AWGN) matrix whose columns are i.i.d. as $\mathcal{CN}(\mathbf{0},\sigma_c^2\mathbf{I})$. The communication MI, representing the achievable communication rate, is given by:
\begin{align}
    R_c(\mathbf{F})
    =
    \log\det \left(\mathbf{I}
        +\mathbf{F}^H\left(\frac{1}{\sigma_c^2}\bar{\mathbf{H}}_c^H\bar{\mathbf{H}}_c\right)\mathbf{F}\right).
\end{align}
Note that we have normalized the MI by the blocklength $T$ (i.e., bits per channel use). For notational simplicity, we define the effective communication channel as $\mathbf{H}_c \triangleq \frac{1}{\sigma_c}\bar{\mathbf{H}}_c$, which simplifies the MI expression to:
\begin{align}
    R_c(\mathbf{F})
    =
    \log\det\bigl(\mathbf{I}
        +\mathbf{F}^H\mathbf{H}_c^H\mathbf{H}_c\mathbf{F}\bigr). \label{cmi}
\end{align}

\subsection{Secrecy Model}
Similarly, let $\bar{\mathbf{H}}_e\in\mathbb{C}^{n_e\times n_t}$ denote the channel matrix from the TX to the eavesdropper. Following the standard MIMO wiretap channel model \cite{KhistiMIMOME2010}, we assume the TX has perfect CSI of the eavesdropper's channel. The received signal at the eavesdropper is given by:
\begin{align}
    \mathbf{Y}_e
    =
    \bar{\mathbf{H}}_e\,\mathbf{X}+\mathbf{Z}_e
    =
    \bar{\mathbf{H}}_e\,\mathbf{F}\mathbf{S}+\mathbf{Z}_e,
\end{align}
where $\mathbf{Z}_e \in \mathbb{C}^{n_e \times T}$ is an AWGN matrix whose columns are i.i.d. as $\mathcal{CN}(\mathbf{0},\sigma_e^2\mathbf{I})$. The MI at the eavesdropper is given by:
\begin{align}
    R_e(\mathbf{F})
    =
    \log\det\left(\mathbf{I}
        +\mathbf{F}^H\left(\frac{1}{\sigma_e^2}\bar{\mathbf{H}}_e^H\bar{\mathbf{H}}_e\right)\mathbf{F}\right).\label{eq:MIeve}
\end{align}
By defining the effective eavesdropper channel $\mathbf{H}_e \triangleq \frac{1}{\sigma_e}\bar{\mathbf{H}}_e$, the MI expression simplifies to:
\begin{align}
    R_e(\mathbf{F})
    =
    \log\det\bigl(\mathbf{I}
        +\mathbf{F}^H\mathbf{H}_e^H\mathbf{H}_e\mathbf{F}\bigr). \label{emi}
\end{align}
For a linear precoder $\mathbf{F}$ and Gaussian signaling, the achievable secrecy rate is lower-bounded by the difference between the MI at the RX and the MI at the eavesdropper \cite{Li:ciss:07}:
\begin{align}
    R_{\mathrm{sec}}(\mathbf{F}) = [R_c(\mathbf{F}) - R_e(\mathbf{F})]^+.
    \label{eq:R_sec}
\end{align}
We focus on operating points for which secure communication is feasible, i.e., $R_c(\mathbf{F}) \ge R_e(\mathbf{F})$.

\subsection{Sensing Model}
The TX also performs target sensing using the same waveform. The random channel $\bar{\mathbf{H}}_s \in \mathbb{C}^{n_s \times n_t}$ represents the round-trip target response that the TX aims to estimate. The received sensing signal is given by:
\begin{align}
    \mathbf{Y}_s = \bar{\mathbf{H}}_s\,\mathbf{X} + \mathbf{Z}_s
    = \bar{\mathbf{H}}_s\,\mathbf{F}\mathbf{S} + \mathbf{Z}_s,
\end{align}
where $\mathbf{Z}_s \in \mathbb{C}^{n_s \times T}$ is an AWGN matrix whose columns are i.i.d. as $\mathcal{CN}(\mathbf{0}, \sigma_s^2\mathbf{I})$. 
We adopt a statistical model for the sensing channel, assuming the rows of $\bar{\mathbf{H}}_s$ are i.i.d. as $\mathcal{CN}(\mathbf{0}, \mathbf{R}_{\bar{\mathbf{H}}_s})$, reflecting an extended target with a rich-scattering response. For a sufficiently large blocklength $T$, the sample covariance of the data symbols approximates its expectation, i.e., $\frac{1}{T} \mathbf{S} \mathbf{S}^H \approx \mathbf{I}$.
We employ SMI as the sensing metric, which measures the information about the random target channel $\bar{\mathbf{H}}_s$ contained in the observations \cite{Yang:taes:07, tang:tsp:19}. Under the Gaussian model, the SMI is expressed as:
\begin{align} 
    R_s(\mathbf{F}) \approx n_s \log \det \left( \mathbf{I} + \mathbf{F}^H \left(\frac{T}{\sigma_s^2} \mathbf{R}_{\bar{{\mathbf{H}}}_s}\right) \mathbf{F} \right).
\end{align}
To unify the problem structure, we define the effective sensing channel as a factor $\mathbf{H}_s$ satisfying
$\frac{T}{\sigma_s^2}\mathbf{R}_{\bar{\mathbf{H}}_s}=\mathbf{H}_s^H\mathbf{H}_s$,
which incorporates the target statistics, processing gain $T$, and sensing noise into a single matrix.
Such a factor always exists since $\mathbf{R}_{\bar{\mathbf{H}}_s}$ is a covariance matrix and hence Hermitian positive semidefinite.
By omitting the scalar factor $n_s$ for simplicity, the SMI is rewritten in a form identical to the MI expressions:
\begin{align} 
    R_s(\mathbf{F}) = \log \det \left( \mathbf{I} + \mathbf{F}^H \mathbf{H}_s^H \mathbf{H}_s \mathbf{F} \right). \label{smi_2}
\end{align}

\begin{remark}[Operational meaning of SMI] \normalfont
Motivated by rate-distortion theory, SMI has recently emerged as a fundamental bridge between information measures and estimation performance. 
Conventionally, this principle underpins MI-based radar waveform design \cite{tang:tsp:19, Yang:taes:07}, which maximizes the MI between the observations and the target response. 
Recently, in the context of ISAC systems, several studies have employed SMI as the sensing metric \cite{wangUnifiedISACPareto2024, smicommmag23, Rethinktvt23}. 
Under a Gaussian linear model, where the received sensing signal depends linearly on the target response, SMI is tightly connected to estimation accuracy; specifically, increasing SMI is equivalent to minimizing the MMSE of the target response \cite{Yang:taes:07}, which can improve the estimation accuracy of spatial parameters such as angles. 
When the Gaussian linear model does not hold, by the data-processing inequality, SMI provides an upper bound on the MI associated with the sensing target \cite{Rethinktvt23}. 
Thus, SMI serves as a valuable performance metric for ISAC systems, acting as a useful surrogate objective that correlates with detection probability \cite{bell:tit:93} and estimation performance \cite{smicommmag23}. 
Another notable advantage is its mathematical compatibility with other MI expressions. For instance, the MI expressions for communication, eavesdropping, and sensing take identical forms, as shown in \eqref{cmi}, \eqref{emi}, and \eqref{smi_2}.
This unified formulation facilitates analysis of the optimal precoder structure for the MIMO-ME-MS channel.
We elaborate on this in the next section.
\end{remark}

\subsection{Problem Formulation}

Our objective is to design the precoder $\mathbf{F}$ that maximizes a weighted sum of the secrecy rate and the SMI. The corresponding objective function is defined as: 
\begin{align}
    R(\mathbf{F}) &= w_c R_{\mathrm{sec}}(\mathbf{F}) + w_s R_{s}(\mathbf{F}),
    \label{eq:R_wmi}
\end{align}
where $w_c, w_s \ge 0$ and $w_c + w_s =1$ are weights that control the tradeoff between secure communication and sensing performance.
The optimization problem can thus be expressed using the unified effective-channel models:

\begin{align}
    \label{eq:prob-log}
    \max_{\mathbf{F}} \quad \hspace{-0.3em} & w_c \hspace{-0.1em} \log\det(\mathbf{I} + \mathbf{F}^H \mathbf{H}_c^H \mathbf{H}_c \mathbf{F}) - w_c \hspace{-0.1em} \log\det(\mathbf{I} + \mathbf{F}^H \mathbf{H}_e^H \mathbf{H}_e \mathbf{F}) \nonumber \\
    & + w_s \hspace{-0.1em} \log\det(\mathbf{I} + \mathbf{F}^H \mathbf{H}_s^H \mathbf{H}_s \mathbf{F}) \\
    \text{s.t.} \quad \hspace{-0.3em} & \tr(\mathbf{F}\mathbf{F}^H) \le P_{\mathrm{tot}}. \nonumber
\end{align}

Any precoder can be decomposed via singular value decomposition (SVD) as $\mathbf{F} = \mathbf{U}\mathbf{\Sigma}\mathbf{V}^H$. Because the objective function contains only terms of the form $\log\det(\mathbf{I}+\mathbf{F}^H\mathbf{A}\mathbf{F})$, the right unitary matrix $\mathbf{V}^H$ does not affect the objective value (unitary-invariance: $\det(\mathbf{I}+ \mathbf{V}\mathbf{X}\mathbf{V}^H)=\det(\mathbf{I}+\mathbf{X})$). Furthermore, the transmit power constraint $\tr(\mathbf{F}\mathbf{F}^H)$ is also independent of $\mathbf{V}$. Thus, without loss of generality, we can restrict $\mathbf{F}$ to the structure $\mathbf{F} = \mathbf{W}\mathbf{P}^{1/2}$, where $\mathbf{W} \in \mathbb{C}^{n_t \times N_s}$ is a semi-unitary matrix ($\mathbf{W}^H\mathbf{W}=\mathbf{I}$) representing the precoding basis, and $\mathbf{P}\in \mathbb{R}_+^{N_s \times N_s}$ is a diagonal matrix denoting the per-stream powers.

Despite this simplification, the optimization problem \eqref{eq:prob-log} remains highly challenging due to multiple sources of nonconvexity. 
Specifically, the joint optimization over the precoding basis $\mathbf{W}$ and the power allocation $\mathbf{P}$ is nonconvex due to their bilinear coupling. 
Furthermore, the semi-unitary constraint on $\mathbf{W}$ restricts the feasible set to a nonconvex manifold. 
The secrecy objective introduces an additional, more severe layer of complexity: the difference-of-log-det structure makes the problem nonconcave even when $\mathbf{W}$ is fixed, precluding direct use of standard convex optimization techniques for the power allocation subproblem. 
These difficulties motivate the analysis of the precoder structure in the next section.

\section{Optimal Precoder Structure Analysis} \label{sec:analysis}
In this section, we analyze the optimal precoder structure and extract insights to guide practical precoder design for the MIMO-ME-MS channel. 
For each effective channel matrix $\mathbf{H}_i$ ($i \in \{c,e,s\}$), let $\CMcal{R}_i$ and $\CMcal{N}_i$ denote the row and null spaces, respectively: 
\begin{align}
    \CMcal{R}_i \triangleq \CMcal{C}(\mathbf{H}^H_i), \quad \CMcal{N}_i \triangleq \CMcal{N}(\mathbf{H}_i), \quad \text{for} \; i \in \{c, s, e\}.
\end{align}
To characterize the system performance in the high-SNR regime, we define the DoF of channel $i$ achieved by a precoder $\mathbf{F}$ as:
\begin{align}
    \label{eq:DoF_def}
    d_i(\mathbf F)\;\triangleq\;
    \lim_{P\to\infty}\frac{R_i(\mathbf F;P)}{\log_2 P}.
\end{align}
Accordingly, we denote by $d(\mathbf{F})$ the weighted DoF corresponding to the objective function in \eqref{eq:prob-log}. 

For a single channel matrix $\mathbf{H}_c$, the transmit space $\mathbb{C}^{n_t}$ can be decomposed into the orthogonal direct sum of its row space $\CMcal{R}_c$ and null space $\CMcal{N}_c$ (i.e., $\CMcal{R}_c \oplus \CMcal{N}_c$ with $\CMcal{R}_c \perp \CMcal{N}_c$). In this conventional single-user MIMO setting, the optimal precoder is constructed via the SVD of ${\mathbf{H}}_c$, allocating transmit power to the dominant eigenmodes within $\CMcal{R}_c$ (e.g., via water-filling), while avoiding the null space $\CMcal{N}_c$. However, in the MIMO-ME-MS case, the TX must simultaneously account for three distinct effective channels: ${\mathbf{H}}_c$, ${\mathbf{H}}_e$, and ${\mathbf{H}}_s$. This coupling renders a straightforward application of SVD-based precoding insufficient, as the optimal strategy requires balancing conflicting objectives across non-orthogonal subspaces. This complexity necessitates a more sophisticated analysis. 

We begin with the MIMO-ME channel, a special case of MIMO-ME-MS obtained by setting $w_s = 0$. The optimal precoder structure for this scenario is known in the high-SNR regime \cite{KhistiMIMOME2010}. Analyzing this regime yields valuable insights into the optimal precoder's structure and serves as a foundation for the more general case.

\subsection{MIMO-ME Channel} \label{sec:mimome}
Focusing on the interaction between the communication and eavesdropper channels, we partition the transmit space $\mathbb{C}^{n_t}$ based on the interplay between their respective row and null spaces. This yields a direct sum decomposition of four subspaces:
\begin{align}
    \mathbb{C}^{n_t} = 
    \underbrace{(\CMcal{V}_n^\perp\cap\CMcal{N}_e)}_{\text{Comm.-private}}
    \oplus
    \underbrace{(\CMcal{V}_n^\perp\cap\CMcal{N}_c)}_{\text{Eve.-private}}
    \oplus
    \underbrace{(\CMcal{R}_c\cap\CMcal{R}_e)}_{\text{common}}
    \oplus
    \underbrace{(\CMcal{N}_c\cap\CMcal{N}_e)}_{\text{total-null}},
    \label{eq:me_spaces}
\end{align} 
where $\CMcal{V}_n=\CMcal{N}_c\cap\CMcal{N}_e$. This decomposition constitutes a special case of the general framework established in Theorem~\ref{thm:mems_decomp} (Section~\ref{sec:mimomems}). Since the rate characteristics differ significantly across these subspaces, the precoder must be carefully structured to exploit their distinct contributions. We analyze each subspace below.

\begin{itemize}
    \item \textbf{Comm.-private subspace ($\CMcal{V}_n^\perp \cap\CMcal{N}_e$):} This subspace is ideal for secure transmission since the signal is nulled at the eavesdropper, i.e., $R_e(\mathbf{F})=0$.
    Thus, the secrecy rate scales logarithmically with transmit power, providing a positive DoF gain. To maximize this gain, it is desirable to allocate a dominant share of the power, on the order of $O(P_{\mathrm{tot}})$, to these directions.

     \item \textbf{Common subspace ($\CMcal{R}_c\cap\CMcal{R}_e$):} 
     This is a contested subspace where both parties (RX and eavesdropper) receive the signal. As the transmit power increases, the MIs at both the RX and the eavesdropper grow logarithmically. 
     Therefore, the secrecy rate converges to a constant gain or loss determined by the channel strength ratio between the RX and the eavesdropper within this subspace, contributing zero DoF. 
     To harvest the positive constant gain (in directions where the communication channel is stronger), only a vanishingly small power allocation, on the order of $o(P_{\mathrm{tot}})$, is sufficient.

    \item \textbf{Eve.-private ($\CMcal{V}_n^\perp \cap \CMcal{N}_c$) \& total-null ($\CMcal{N}_c \cap \CMcal{N}_e$) subspaces:} 
    Any power allocated to the Eve.-private subspace actively reduces the secrecy rate, as it contributes only to the eavesdropper's MI $R_e(\mathbf{F})$ without providing any benefit to the RX. Similarly, the power allocated to the total-null space is simply wasted, as it contributes to neither the RX's MI nor the eavesdropper's MI. Consequently, the optimal strategy allocates zero power to these directions.
\end{itemize}

This analysis naturally leads to a two-tiered power allocation strategy: allocate $O(P_{\mathrm{tot}})$ to the Comm.-private subspace for DoF gains and $o(P_{\mathrm{tot}})$ to the beneficial parts of the common subspace for constant gains. 
However, such an ideal power allocation strategy is infeasible,
because the four subspaces in \eqref{eq:me_spaces} are generally not mutually orthogonal. Consequently, there is no precoding basis that is both orthogonal and confined to a single subspace. 
For example, if power intended for a secure stream leaks into the Eve.-private subspace, such leakage directly penalizes the secrecy rate.

This challenge was addressed in \cite{KhistiMIMOME2010} by using the GSVD. Specifically, this method constructs a non-orthogonal basis that suitably controls the direction of the inherent power leakage.
In this design, the $o(P_{\mathrm{tot}})$ power allocated to the common subspace streams may leak into the Comm.-private subspace. However, since this subspace already carries a dominant $O(P_{\mathrm{tot}})$ power allocation, the leakage becomes asymptotically negligible in the high-SNR regime. 
This specific power hierarchy allows GSVD-based precoding to achieve the secrecy capacity in the high-SNR regime.

\subsection{MIMO-MS Channel} \label{sec:mimoms}
We now turn to the MIMO-MS channel, which corresponds to the special case where the eavesdropper is absent (i.e., $R_e(\mathbf{F}) = 0$). Analogous to the MIMO-ME analysis, we partition the transmit space $\mathbb{C}^{n_t}$ by considering the interplay between the row and null spaces of the communication channel ($\CMcal{R}_c, \CMcal{N}_c$) and the sensing channel ($\CMcal{R}_s, \CMcal{N}_s$). This yields a direct sum decomposition of four subspaces:
\begin{align}
    \mathbb{C}^{n_t} = 
    \underbrace{(\CMcal{V}_n^\perp\cap\CMcal{N}_s)}_{\text{Comm.-private}}
    \oplus
    \underbrace{(\CMcal{V}_n^\perp\cap\CMcal{N}_c)}_{\text{Sens.-private}}
    \oplus
    \underbrace{(\CMcal{R}_c\cap\CMcal{R}_s)}_{\text{common}}
    \oplus
    \underbrace{(\CMcal{N}_c\cap\CMcal{N}_s)}_{\text{total-null}},
    \label{eq:ms_spaces}
\end{align}
where $\CMcal{V}_n=\CMcal{N}_c\cap\CMcal{N}_s$ in this context. 
Similar to the MIMO-ME case, the decomposition \eqref{eq:ms_spaces} corresponds to a special case of the general framework established in Theorem~\ref{thm:mems_decomp} (Section~\ref{sec:mimomems}).
Unlike the MIMO-ME channel, where only the Comm.-private subspace ($\CMcal{V}_n^\perp\cap\CMcal{N}_e$) provides positive DoF gains in the high-SNR regime, 
in the MIMO-MS channel, the Comm.-private subspace ($\CMcal{V}_n^\perp\cap\CMcal{N}_s$), the Sens.-private subspace ($\CMcal{V}_n^\perp\cap\CMcal{N}_c$), and the common subspace ($\CMcal{R}_c\cap\CMcal{R}_s$) all contribute to the positive DoF. 
To achieve this, a dominant share of the power (i.e., $O(P_{\mathrm{tot}})$) should be allocated to each of the Comm.-private, Sens.-private, and common subspaces.

To further explore this, we first assume that the channels $\mathbf{H}_c$ and $\mathbf{H}_s$ share a common basis of right singular vectors, denoted by the unitary matrix $\mathbf{V}$. 
However, we clarify that this assumption does not hold in general. It is introduced to provide insights into the optimal precoder structure. The general case where $\mathbf{H}_c$ and $\mathbf{H}_s$ do not share the same right singular vectors will be discussed in Remark~\ref{remark:wmmse}.
Under this assumption, the Gram matrices $\mathbf{H}_c^{H}\mathbf{H}_c$ and $\mathbf{H}_s^{H}\mathbf{H}_s$ are simultaneously unitarily diagonalizable (i.e., they commute). Further, the four subspaces in \eqref{eq:ms_spaces} become mutually orthogonal, which eliminates inter-subspace power leakage and enables a decoupled analysis of the optimal precoder across individual subspaces.
To be specific, the shared eigenbasis $\mathbf{V}$ jointly diagonalizes
$\mathbf{H}_c^H\mathbf{H}_c$ and $\mathbf{H}_s^H\mathbf{H}_s$:
\begin{align}
    \mathbf{H}_c^H\mathbf{H}_c = \mathbf{V} \mathbf{\Lambda}_c \mathbf{V}^H, \quad  \mathbf{H}_s^H\mathbf{H}_s = \mathbf{V} \mathbf{\Lambda}_s \mathbf{V}^H, \label{eq:diagonalize}
\end{align}
where $\mathbf{\Lambda}_c = \diag(\lambda_{c,1}, \dots, \lambda_{c,n_t})$ and $\mathbf{\Lambda}_s = \diag(\lambda_{s,1}, \dots, \lambda_{s,n_t})$ are the diagonal matrices containing the respective channel eigenvalues. 

By Hadamard's inequality, selecting $\mathbf{V}$ as the precoding basis is optimal. Upon applying $\mathbf{V}$ as the precoding basis, the weighted sum maximization problem \eqref{eq:prob-log} reduces to: 
\begin{align}
    \max_{\{p_k\}} \quad &
    \sum_{k=1}^{N_s}\Bigl( w_c \log(1+\lambda_{c,k}p_k) + w_s \log(1+\lambda_{s,k}p_k) \Bigr) \nonumber \\
    \text{s.t.}\quad &
    p_k \ge 0,\ k=1,\ldots,N_s,\quad \sum_{k=1}^{N_s} p_k \le P_{\mathrm{tot}}.
\end{align}
where $p_k$ is the power allocated to the $k$-th eigenmode. 
This problem is a standard convex optimization problem; therefore, the solution derived from the Karush--Kuhn--Tucker (KKT) conditions is guaranteed to be the global optimum. For any eigenmode $k$ that receives non-zero power ($p_k > 0$), the stationarity condition requires that 
\begin{align}
    \frac{\partial \mathcal{L}}{\partial p_k} = \frac{w_c \lambda_{c,k}}{1 + \lambda_{c,k} p_k} + \frac{w_s \lambda_{s,k}}{1 + \lambda_{s,k} p_k} = \nu,
    \label{eq:kkt_condition_ms}
\end{align}
where $\nu$ is the Lagrange multiplier.
Solving \eqref{eq:kkt_condition_ms} for $p_k$ yields a generalized water-filling solution, where the water-level $\nu$ is chosen to satisfy the total power constraint. 
In the high-SNR regime ($P_{\mathrm{tot}}\to\infty$), power is allocated within the Comm.-private subspace, the Sens.-private subspace, and the common subspace, while the allocation across these subspaces is governed by the weights ($w_c$, $w_s$, $w_c+w_s$), respectively. 

In the general case where $\mathbf{H}_c$ and $\mathbf{H}_s$ have distinct bases, however, the above clean separation no longer holds. 
Since the subspaces in \eqref{eq:ms_spaces} are not orthogonal, the power allocated to one subspace may leak into other subspaces. 
For this reason, the high-SNR optimality condition, under which the Comm.-private, Sens.-private, and common subspaces must be allocated power proportional to $w_c$, $w_s$, and $w_c+w_s$, respectively, cannot be sustained.
This stands in sharp contrast to the MIMO-ME channel. 
In the MIMO-ME channel, the optimality condition can still be maintained despite power leakage, owing to the two-tiered power allocation ($O(P_{\mathrm{tot}})$ vs $o(P_{\mathrm{tot}})$). 
Since leakage from an $o(P_{\mathrm{tot}})$ stream into an $O(P_{\mathrm{tot}})$ stream is asymptotically negligible, such leakage does not hurt optimality. 
In the MIMO-MS channel, however, all active subspaces require $O(P_{\mathrm{tot}})$ power, rendering power leakage fundamentally detrimental.
Consequently, in the general case where the right singular vectors are not shared, a closed-form characterization of the optimal precoder structure is, unfortunately, no longer attainable.

\begin{remark}[On WMMSE-based optimization for MIMO-MS] \normalfont \label{remark:wmmse}
Although a closed-form solution for the MIMO-MS channel is generally intractable, WMMSE-based algorithms can be effectively employed due to the structural similarity between SMI and communication MI \cite{wangUnifiedISACPareto2024}. However, the standard WMMSE framework cannot be directly applied to the MIMO-ME-MS problem in~\eqref{eq:prob-log}. This is because the secrecy rate involves a difference of terms ($R_c - R_e$), which breaks the equivalence between the log-det rate and the weighted MSE required for convergence.
\end{remark}

Having analyzed the constituent MIMO-ME and MIMO-MS subproblems, 
we now address the MIMO-ME-MS channel, where the objectives of secure communication and sensing must be jointly optimized. 

\begin{table}[t]
    \centering
    \caption{Subspace Decomposition for MIMO-ME-MS} 
    \label{tab:subspace_dof}
    \begin{tabular}{@{}lll@{}}
    \toprule
    \textbf{Label} & \textbf{Definition} & \textbf{DoF Weight} \\
    \midrule
    $\CMcal{V}_{n}$   & $\CMcal{N}_c \cap \CMcal{N}_s \cap \CMcal{N}_e$ & $0$ \\ 
    $\CMcal{V}_{c}$   & $\CMcal{V}_n^\perp \cap\CMcal{N}_s \cap \CMcal{N}_e$ & $+w_c$ \\ 
    $\CMcal{V}_{s}$   & $\CMcal{V}_n^\perp \cap\CMcal{N}_c \cap \CMcal{N}_e$ & $+w_s$ \\
    $\CMcal{V}_{cs}$  & $(\bigoplus_{j \in \{n,c,s\}} \CMcal{V}_j)^\perp \cap \CMcal{N}_e$ & $w_c+w_s$ \\
    $\CMcal{V}_{cse}$ & $\CMcal{R}_c \cap \CMcal{R}_s \cap \CMcal{R}_e$ & $+w_s$ \\
    $\CMcal{V}_{se}$  & $\CMcal{V}_{cse}^\perp \cap\CMcal{R}_s \cap \CMcal{R}_e$ & $w_s-w_c$ \\
    $\CMcal{V}_{ce}$  & $\CMcal{V}_{cse}^\perp \cap\CMcal{R}_c \cap \CMcal{R}_e$ & $0$ \\
    $\CMcal{V}_{e}$   & $(\bigoplus_{j \in \{se,ce,cse\}} \CMcal{V}_j)^\perp \cap \CMcal{R}_e$ & $-w_c$ \\
    \bottomrule
    \end{tabular}
\end{table}

\subsection{MIMO-ME-MS Channel} \label{sec:mimomems}
Finally, we partition the transmit space $\mathbb{C}^{n_t}$ by considering the interplay among the row and null spaces of all three effective channels: $\mathbf{H}_c$, $\mathbf{H}_s$, and $\mathbf{H}_e$. This yields a complete decomposition of the transmit space into a direct sum of eight subspaces, as summarized in Theorem~\ref{thm:mems_decomp} and Table~\ref{tab:subspace_dof}.
We note that Theorem~\ref{thm:mems_decomp} encompasses the subspace partitioning results for the MIMO-ME and MIMO-MS channels presented in \eqref{eq:me_spaces} and \eqref{eq:ms_spaces}.

\begin{theorem}[Subspace decomposition] \label{thm:mems_decomp}
The eight subspaces $\{\CMcal{V}_j\}$ defined in Table~\ref{tab:subspace_dof} form a direct sum decomposition of the transmit space $\mathbb{C}^{n_t}$:
\begin{align}
    \mathbb{C}^{n_t} = \bigoplus_{j \in \mathcal{K}} \CMcal{V}_{j}, \label{eq:mems_spaces}
\end{align}
where $\mathcal{K} = \{n, c, s, e, cs, se, ce, cse\}$. 
\end{theorem}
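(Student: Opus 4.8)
The plan is to reduce the statement to three ``null-space identities'' plus a single dimension count, exploiting the fact that the subspaces in Table~\ref{tab:subspace_dof} are built from orthogonal complements. First I would record the containments that are immediate from the definitions: $\CMcal{V}_n,\CMcal{V}_c,\CMcal{V}_s\subseteq\CMcal{N}_e$, and cyclically $\CMcal{V}_n,\CMcal{V}_c,\CMcal{V}_e\subseteq\CMcal{N}_s$ and $\CMcal{V}_n,\CMcal{V}_s,\CMcal{V}_e\subseteq\CMcal{N}_c$ (each listed subspace is by construction an intersection that already includes the relevant null space), together with the easy independence facts: $\CMcal{V}_c,\CMcal{V}_s,\CMcal{V}_e$ are orthogonal to $\CMcal{V}_n$ because they lie in $\CMcal{V}_n^\perp$, and $\CMcal{V}_c\cap\CMcal{V}_s=\CMcal{V}_n^\perp\cap\CMcal{N}_c\cap\CMcal{N}_s\cap\CMcal{N}_e=\CMcal{V}_n^\perp\cap\CMcal{V}_n=\{\mathbf{0}\}$ (similarly for the other pairs). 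Hence $\CMcal{V}_n\oplus\CMcal{V}_c\oplus\CMcal{V}_s$ is a genuine internal direct sum, and likewise for its two cyclic variants.

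The heart of the argument is the observation that, because $\CMcal{V}_n\oplus\CMcal{V}_c\oplus\CMcal{V}_s\subseteq\CMcal{N}_e$, the definition $\CMcal{V}_{cs}=(\CMcal{V}_n\oplus\CMcal{V}_c\oplus\CMcal{V}_s)^\perp\cap\CMcal{N}_e$ makes $\CMcal{V}_{cs}$ exactly the orthogonal complement of $\CMcal{V}_n\oplus\CMcal{V}_c\oplus\CMcal{V}_s$ \emph{inside} $\CMcal{N}_e$, so that
\begin{align}
    \CMcal{N}_e &= \CMcal{V}_n\oplus\CMcal{V}_c\oplus\CMcal{V}_s\oplus\CMcal{V}_{cs}, \nonumber \\
    \CMcal{N}_s &= \CMcal{V}_n\oplus\CMcal{V}_c\oplus\CMcal{V}_e\oplus\CMcal{V}_{ce}, \nonumber \\
    \CMcal{N}_c &= \CMcal{V}_n\oplus\CMcal{V}_s\oplus\CMcal{V}_e\oplus\CMcal{V}_{se}, \nonumber
\end{align}
each an orthogonal direct sum. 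Adding the three gives $\CMcal{N}_c+\CMcal{N}_s+\CMcal{N}_e=\sum_{j\neq cse}\CMcal{V}_j$, and since $\CMcal{V}_{cse}$ is defined as $(\sum_{j\neq cse}\CMcal{V}_j)^\perp$ this simultaneously delivers the second claim, $\CMcal{V}_{cse}=(\CMcal{N}_c+\CMcal{N}_s+\CMcal{N}_e)^\perp=\CMcal{N}_c^\perp\cap\CMcal{N}_s^\perp\cap\CMcal{N}_e^\perp=\CMcal{R}_c\cap\CMcal{R}_s\cap\CMcal{R}_e$, and the spanning property $\mathbb{C}^{n_t}=(\CMcal{N}_c+\CMcal{N}_s+\CMcal{N}_e)\oplus\CMcal{V}_{cse}=\sum_{j\in\mathcal{K}}\CMcal{V}_j$.

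It remains to upgrade ``span'' to ``direct sum.'' Since $\CMcal{V}_{cse}$ is orthogonal to the other seven subspaces, it suffices to verify $\sum_{j\in\mathcal{K}}\dim\CMcal{V}_j=n_t$, which together with the spanning forces directness. The dimensions follow from the structure already obtained: $\dim\CMcal{V}_n=n_t-\dim(\CMcal{R}_c+\CMcal{R}_s+\CMcal{R}_e)$, $\dim\CMcal{V}_c=\dim(\CMcal{R}_c+\CMcal{R}_s+\CMcal{R}_e)-\dim(\CMcal{R}_s+\CMcal{R}_e)$ (cyclically for $\CMcal{V}_s,\CMcal{V}_e$, using $\CMcal{V}_c=(\CMcal{R}_c+\CMcal{R}_s+\CMcal{R}_e)\cap(\CMcal{R}_s+\CMcal{R}_e)^\perp$), the three pairwise dimensions from the null-space identities, and $\dim\CMcal{V}_{cse}=\dim(\CMcal{R}_c\cap\CMcal{R}_s\cap\CMcal{R}_e)$. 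Substituting and repeatedly applying $\dim(\CMcal{A}+\CMcal{B})=\dim\CMcal{A}+\dim\CMcal{B}-\dim(\CMcal{A}\cap\CMcal{B})$, the identity $\sum_j\dim\CMcal{V}_j=n_t$ collapses exactly to the three-term inclusion--exclusion relation
\begin{align}
    \dim(\CMcal{R}_c+\CMcal{R}_s+\CMcal{R}_e) &= \dim\CMcal{R}_c+\dim\CMcal{R}_s+\dim\CMcal{R}_e \nonumber \\
    &\quad - \dim(\CMcal{R}_c\cap\CMcal{R}_s)-\dim(\CMcal{R}_c\cap\CMcal{R}_e)-\dim(\CMcal{R}_s\cap\CMcal{R}_e) \nonumber \\
    &\quad + \dim(\CMcal{R}_c\cap\CMcal{R}_s\cap\CMcal{R}_e). \nonumber
\end{align}
This is the one genuinely nontrivial ingredient, and the step I expect to be the main obstacle, since three-subspace inclusion--exclusion is not valid for arbitrary subspaces; it has to be discharged from the structure of the model (for instance, a full-rank rich-scattering sensing covariance gives $\CMcal{R}_s=\mathbb{C}^{n_t}$, whereupon the relation collapses to the two-subspace case), and the degenerate choices $\CMcal{R}_s\in\{\{\mathbf{0}\},\mathbb{C}^{n_t}\}$ recover the decompositions \eqref{eq:me_spaces} and \eqref{eq:ms_spaces} as consistency checks. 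With this identity in hand, $\dim(\sum_j\CMcal{V}_j)=n_t=\sum_j\dim\CMcal{V}_j$, so the sum is direct and the proof is complete.
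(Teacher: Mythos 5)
Your first half is fine and essentially coincides with the paper's Step~0 and Step~5: by the ``orthogonal complement inside a subspace'' argument you get $\CMcal{N}_e=\CMcal{V}_n\oplus\CMcal{V}_c\oplus\CMcal{V}_s\oplus\CMcal{V}_{cs}$ and its two cyclic analogues, summing them gives $\sum_{j\neq cse}\CMcal{V}_j=\CMcal{N}_c+\CMcal{N}_s+\CMcal{N}_e$, and hence $\CMcal{V}_{cse}=(\CMcal{N}_c+\CMcal{N}_s+\CMcal{N}_e)^\perp=\CMcal{R}_c\cap\CMcal{R}_s\cap\CMcal{R}_e$ together with the spanning of $\mathbb{C}^{n_t}$. (Minor slip: those four-term decompositions are direct but not orthogonal, since e.g.\ $\CMcal{V}_c$ and $\CMcal{V}_s$ need not be orthogonal; this does not affect your use of them.)

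The genuine gap is exactly the step you flag and then leave undischarged. Given spanning, the condition $\sum_j\dim\CMcal{V}_j=n_t$ is not a convenient auxiliary fact: it is \emph{equivalent} to the directness being claimed, and your own computation shows it is in turn equivalent to the three-subspace inclusion--exclusion relation for $\CMcal{R}_c,\CMcal{R}_s,\CMcal{R}_e$. So at the end of your argument you have restated the theorem, not proved it. Worse, that relation cannot be ``discharged from the structure of the model'' at the level of generality of the statement: with $n_t=2$ and three pairwise distinct one-dimensional row spaces, all of $\CMcal{V}_n,\CMcal{V}_c,\CMcal{V}_s,\CMcal{V}_e,\CMcal{V}_{cse}$ are $\{\mathbf{0}\}$ while $\CMcal{V}_{cs}=\CMcal{N}_e$, $\CMcal{V}_{ce}=\CMcal{N}_s$, $\CMcal{V}_{se}=\CMcal{N}_c$ are three distinct lines, so the dimensions sum to $3>n_t$ and inclusion--exclusion fails; checking only the degenerate cases $\CMcal{R}_s\in\{\{\mathbf{0}\},\mathbb{C}^{n_t}\}$ does not cover the statement, and this configuration shows that any complete argument must either impose an explicit rank/genericity condition or do real work precisely on the joint geometry of the pairwise-common subspaces. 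The paper takes a different route that avoids dimension counting altogether: after building the seven-term sum, it proves directness by successively showing each newly added subspace intersects the sum of the previously added ones trivially (its Steps~1--4, based on the null-space identities you also derived), with the delicate interaction of $\CMcal{V}_{cs},\CMcal{V}_{ce},\CMcal{V}_{se}$ handled in its Step~4 --- exactly where your reduction predicts the difficulty lies. To repair your proof, replace the global dimension count by such trivial-intersection arguments, or state and use the additional channel assumption under which the inclusion--exclusion identity actually holds.
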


\begin{IEEEproof}
    The proof is provided in Appendix~\ref{sec:appen1}.
\end{IEEEproof}

The precise definition of each subspace and its corresponding DoF contribution are summarized in Table~\ref{tab:subspace_dof}. 
As observed in the previous analysis of the MIMO-ME and MIMO-MS channels, the optimal precoder should be structured to incorporate the DoF gains offered by each subspace: 
\begin{itemize}
    \item \textbf{Positive DoF gain ($\CMcal{V}_{c}$, $\CMcal{V}_{s}, \CMcal{V}_{cs}$, $\CMcal{V}_{cse}$):}
    The private subspaces \textbf{$\CMcal{V}_{c}$} (Comm.-private) and \textbf{$\CMcal{V}_{s}$} (Sens.-private) provide positive DoF gains of $w_c$ and $w_s$, respectively. The full-common space \textbf{$\CMcal{V}_{cse}$} is dominated by the sensing objective, providing a DoF of $w_s$. The Secure-ISAC space \textbf{$\CMcal{V}_{cs}$} is the most beneficial, offering a combined DoF gain of $w_c+w_s$. To achieve the logarithmic rate gains, a dominant $O(P_{\mathrm{tot}})$ power allocation across the corresponding subspaces is necessary. 

    \item \textbf{Conditional or synergistic gain ($\CMcal{V}_{se}$, $\CMcal{V}_{ce}$):} 
    These subspaces introduce intricate tradeoffs but can be jointly exploited via subspace alignment. Pairing up to $\min\{k_{ce}, k_{se}\}$ dimensions from both yields a positive net DoF of $w_s$, requiring $O(P_{\mathrm{tot}})$ power. The remaining unaligned dimensions retain their individual characteristics. 
    The leftover Sens.-eve.-common subspace \textbf{$\CMcal{V}_{se}$} yields a net DoF of $w_s - w_c$. Its optimal power allocation depends on the weights: $O(P_{\mathrm{tot}})$ if $w_s > w_c$, zero if $w_s < w_c$ (to avoid rate penalties), and $o(P_{\mathrm{tot}})$ if $w_s = w_c$ to capture constant gain.
    The leftover Comm.-eve.-common space \textbf{$\CMcal{V}_{ce}$} offers zero DoF, requiring only vanishingly small power, $o(P_{\mathrm{tot}})$, to exploit its constant gain.
    
    \item \textbf{Negative or zero gain ($\CMcal{V}_{e}$, $\CMcal{V}_{n}$):}
    Any power in the Eve.-private space $\CMcal{V}_{e}$ incurs a rate penalty with a DoF of $-w_c$, while power in the total-null space $\CMcal{V}_{n}$ is wasted. Consequently, the optimal strategy allocates zero power to both.
\end{itemize}

This subspace analysis reveals why the MIMO-ME-MS channel is fundamentally more challenging than a simple superposition of its constituent parts. A natural question is whether optimal strategies for the subproblems can be combined. For instance, consider an ideal setting where the communication and sensing channels are aligned (i.e., share a common eigenbasis).
In this special case, the optimal precoder for the MIMO-MS channel is known to be constructed from the shared eigenbasis $\mathbf{V}$ in \eqref{eq:diagonalize}.
One might therefore conjecture that a straightforward combination of design principles that are optimal for the MIMO-MS and MIMO-ME channels, namely, the GSVD-based precoding with the common eigenbasis, would also be asymptotically optimal for the MIMO-ME-MS channel. However, such a combination is strictly suboptimal.
The effectiveness of GSVD-based precoding in the MIMO-ME scenario critically depends on a two-tiered power allocation strategy: dominant $O(P_{\mathrm{tot}})$ power for the private subspace and vanishing $o(P_{\mathrm{tot}})$ power for the common subspace. This ensures that any power leakage from the low-power common-space streams into the high-power private-space streams becomes asymptotically negligible ($O(P_{\mathrm{tot}})$ vs. $o(P_{\mathrm{tot}})$) and is therefore harmless to secrecy performance. 
This premise is violated in the MIMO-ME-MS channel. Here, achieving the maximum DoF requires allocating $O(P_{\mathrm{tot}})$ power to multiple subspaces, including the common space $\CMcal{V}_{cse}$. 
That is to say, in the MIMO-ME-MS channel, streams in both the common and private subspaces must carry $O(P_{\mathrm{tot}})$ power to maximize DoF. 
Consequently, when power allocated to a common-space stream leaks into a private-space stream, this high-power leakage is on the same order as the intended power of the private-space stream and is thus no longer asymptotically negligible. 
This leakage breaks the fragile decoupling of streams that underpins the optimality of GSVD-based precoding, rendering the independent power allocation across streams intractable. 
Therefore, the MIMO-ME-MS channel invalidates the direct extension of existing optimal schemes, even under idealized conditions (i.e., shared eigenbasis), and necessitates a more robust design framework capable of managing these new high-power leakage pathways.

Having established that the optimal precoder structure is intractable, 
we turn our attention to a quasi-optimal precoder structure. 
A precoder is defined as {\emph{quasi-optimal}} if it achieves the maximum possible weighted DoF. This implies that the quasi-optimal precoder achieves the optimum performance with a constant gap that does not scale with SNR. 
To this end, we first derive an upper bound on the weighted DoF. 

To determine this upper bound, we leverage the unique decomposition of any precoder $\mathbf{F}$ based on the direct sum structure in \eqref{eq:mems_spaces}. 
Crucially, while the subspaces $\CMcal{V}_j$ are not, in general, mutually orthogonal, the fact that they form a direct sum is sufficient to establish a tight upper bound on the achievable DoF. This result is formalized in the following theorem.

\begin{theorem}[Upper bound of weighted DoF] \label{thm:dof_max}
The weighted DoF $d(\mathbf{F})$ for any precoder $\mathbf{F}$ is upper-bounded by $d_{\text{max}}$, defined as:
\begin{align}
    d_{\text{max}} &\triangleq w_c k_c + w_s k_s + (w_c+w_s)k_{cs} + w_s k_{cse} \nonumber\\
    &\quad + w_s k_{se} - \min\{w_c, w_s\}[k_{se}-k_{ce}]^+,
\end{align}
where $k_j \triangleq \dim(\CMcal{V}_j)$ for $j \in \mathcal{K}$.
\end{theorem}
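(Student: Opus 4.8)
The plan is to bound the weighted DoF $d(\mathbf{F})=w_c d_c(\mathbf{F}) - w_c d_e(\mathbf{F}) + w_s d_s(\mathbf{F})$ by analyzing how an arbitrary precoder $\mathbf{F}$ decomposes along the direct sum $\mathbb{C}^{n_t}=\bigoplus_{j\in\mathcal{K}}\CMcal{V}_j$ from Theorem~\ref{thm:mems_decomp}. First I would write, for each column of $\mathbf{F}$ (or more cleanly, for the column space $\CMcal{C}(\mathbf{F})$), its unique decomposition into components $\mathbf{F}=\sum_{j\in\mathcal{K}}\mathbf{F}_j$ with $\CMcal{C}(\mathbf{F}_j)\subseteq\CMcal{V}_j$, and let $r_j\triangleq\rank(\mathbf{F}_j)\le k_j$. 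The key observation is that $d_i(\mathbf{F})$ — the DoF seen at channel $i\in\{c,e,s\}$ — depends only on how many linearly independent directions of $\mathbf{F}$ survive projection onto the row space $\CMcal{R}_i$, i.e. $d_i(\mathbf{F}) = \rank(\mathbf{H}_i\mathbf{F}) = \dim\big(\mathcal{P}_{\CMcal{R}_i}\CMcal{C}(\mathbf{F})\big)$, since $\log_2\det(\mathbf{I}+\mathbf{F}^H\mathbf{H}_i^H\mathbf{H}_i\mathbf{F})$ grows like $\rank(\mathbf{H}_i\mathbf{F})\log_2 P$ as $P\to\infty$ (each nonzero generalized singular value contributes one unit of DoF; zero ones contribute none).

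Next I would compute, for each of the eight subspaces, its contribution to each channel's row space. By the definitions in Table~\ref{tab:subspace_dof}, each $\CMcal{V}_j$ lies inside $\CMcal{R}_i$ or inside $\CMcal{N}_i$ for each $i$: e.g. $\CMcal{V}_c\subseteq\CMcal{R}_c$ but $\CMcal{V}_c\subseteq\CMcal{N}_e\cap\CMcal{N}_s$; $\CMcal{V}_{cs}\subseteq\CMcal{R}_c\cap\CMcal{R}_s\cap\CMcal{N}_e$; $\CMcal{V}_{se}\subseteq\CMcal{R}_s\cap\CMcal{R}_e\cap\CMcal{N}_c$; $\CMcal{V}_{cse}=\CMcal{R}_c\cap\CMcal{R}_s\cap\CMcal{R}_e$; and $\CMcal{V}_n\subseteq\CMcal{N}_c\cap\CMcal{N}_s\cap\CMcal{N}_e$. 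The subtle point is the \emph{superadditivity/subadditivity of rank under projection}: since the $\CMcal{V}_j$ form a direct sum (though not orthogonal), $\mathcal{P}_{\CMcal{R}_i}\CMcal{C}(\mathbf{F})\subseteq\sum_{j:\CMcal{V}_j\subseteq\CMcal{R}_i}\mathcal{P}_{\CMcal{R}_i}\CMcal{C}(\mathbf{F}_j)$ combined with the fact that $\mathcal{P}_{\CMcal{R}_i}$ kills exactly the components in subspaces contained in $\CMcal{N}_i$, giving $d_i(\mathbf{F})\le\sum_{j:\CMcal{V}_j\subseteq\CMcal{R}_i} r_j$. Plugging the three such bounds into $d(\mathbf{F}) = w_c d_c - w_c d_e + w_s d_s$ and collecting the coefficient of each $r_j$, one gets $d(\mathbf{F})\le \sum_{j\in\mathcal{K}} c_j r_j$ where $c_j$ is precisely the ``DoF Weight'' column of Table~\ref{tab:subspace_dof} — except that for $\CMcal{V}_{se}$ the raw coefficient is $w_s-w_c$, which may be negative. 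Since $0\le r_j\le k_j$ and we are maximizing, each $r_j$ should be taken as $k_j$ when $c_j>0$ and as $0$ when $c_j<0$; this converts $w_s-w_c$ into $[w_s-w_c]^+$, and zero-coefficient terms ($\CMcal{V}_{ce}$, $\CMcal{V}_n$) drop out, yielding exactly $d_{\max}$.

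The main obstacle — and the place where the non-orthogonality of the direct sum must be handled with care — is justifying the inequality $d_e(\mathbf{F})\ge(\text{sum of }r_j\text{ over }j\text{ with }\CMcal{V}_j\subseteq\CMcal{R}_e)$ in the correct direction, since $d_e$ enters with a \emph{negative} sign: an upper bound on $d(\mathbf{F})$ needs a \emph{lower} bound on $d_e$. Here the direct-sum structure is exactly what saves us: because $\mathbf{F}_j$ for $j$ with $\CMcal{V}_j\subseteq\CMcal{N}_c\cap\CMcal{N}_s$ (namely $j\in\{n,e\}$) are nulled at both $c$ and $s$, they do not contribute to $d_c$ or $d_s$ at all, so the only way such streams affect $d(\mathbf{F})$ is through $-w_c d_e$; and since $\CMcal{V}_e\subseteq\CMcal{R}_e$, projecting $\mathbf{F}_e$ onto $\CMcal{R}_e$ preserves its rank $r_e$, and this contribution is \emph{added} (not subtracted) to the other contributions to $\rank(\mathbf{H}_e\mathbf{F})$ because the images $\mathcal{P}_{\CMcal{R}_e}\CMcal{C}(\mathbf{F}_j)$ of distinct direct-sum components, restricted to the respective row spaces, can only overlap in ways that make the rank \emph{larger}, never smaller — formally, $\rank(\mathbf{H}_e\mathbf{F})\ge\rank(\mathbf{H}_e\mathbf{F}_e)=r_e$ need not hold by itself, so the clean argument is to instead lower-bound $d_e$ directly by the dimension of the image of $\mathcal{P}_{\CMcal{R}_e}$ restricted to the span of all relevant $\mathbf{F}_j$ and show that, generically in the row spaces, this equals $\sum_{j:\CMcal{V}_j\subseteq\CMcal{R}_e} r_j$ minus possible cancellations that only help the bound. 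I would make this rigorous by reducing to generic $\mathbf{H}_i$ (the claimed bound is a worst case / the $d_{\max}$ is an upper bound valid for all channel realizations) and invoking that for generic row-space positions the projections behave transversally, so that the rank counts add; alternatively, one can pass to generalized singular value decompositions of the relevant channel pairs along each subspace, mirroring the GSVD argument of \cite{KhistiMIMOME2010}, and count surviving generalized singular values per subspace. Either route reduces the theorem to the elementary optimization $\max_{0\le r_j\le k_j}\sum_j c_j r_j = \sum_j [c_j]^+ k_j$, which gives the stated $d_{\max}$.
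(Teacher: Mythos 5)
Your overall strategy (reduce the DoF to the rank expression $w_c\,\rank(\mathbf{H}_c\mathbf{F})-w_c\,\rank(\mathbf{H}_e\mathbf{F})+w_s\,\rank(\mathbf{H}_s\mathbf{F})$, decompose $\mathbf{F}$ along the eight-subspace direct sum, and bound $d_c,d_s$ from above by subadditivity) is the same opening as the paper's proof, but the step you yourself flag as the main obstacle is exactly where the argument breaks. To keep the $-w_c d_e$ term under control you need a lower bound of the form $\rank(\mathbf{H}_e\mathbf{F})\ge r_{ce}+r_{se}+r_{cse}(+r_e)$, i.e.\ that the eavesdropper's rank is at least the \emph{sum} of the ranks of the individual components it sees. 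This is false: the transmitter chooses the coefficient blocks, and nothing prevents it from giving, say, $\mathbf{G}_{ce}$ and $\mathbf{G}_{cse}$ identical row spaces (the same streams carried on directions from two different subspaces), in which case the stacked rank is $\max$-like rather than additive ($r_{ce}=r_{cse}=1$ but the eavesdropper sees rank $1$). Your two proposed repairs do not close this: (i) ``generic channels / transversality'' is irrelevant because the theorem must hold for every fixed channel realization and, more importantly, the harmful alignment is created by the adversarially chosen $\mathbf{F}$, not by the channel geometry; (ii) the GSVD-per-subspace route is only sketched and faces the same cross-subspace stream-alignment issue. So the reduction to the elementary optimization $\max_{0\le r_j\le k_j}\sum_j c_j r_j$ is not justified as written.

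The paper closes precisely this gap by never splitting the eavesdropper's stacked block into individual ranks. It bounds $\rank(\mathbf{H}_c\mathbf{F})\le\rank\bigl[\mathbf{G}_c;\mathbf{G}_{cs}\bigr]+\rank\bigl[\mathbf{G}_{ce};\mathbf{G}_{cse}\bigr]$ (similarly for $\mathbf{H}_s$), lower-bounds $\rank(\mathbf{H}_e\mathbf{F})\ge\rank\bigl[\mathbf{G}_{ce};\mathbf{G}_{se};\mathbf{G}_{cse}\bigr]$, and then applies the rank submodularity inequality
\begin{align}
\rank\!\left(\begin{bmatrix}\mathbf{B}\\ \mathbf{A}\end{bmatrix}\right)+\rank\!\left(\begin{bmatrix}\mathbf{B}\\ \mathbf{C}\end{bmatrix}\right)\ \ge\ \rank\!\left(\begin{bmatrix}\mathbf{B}\\ \mathbf{A}\\ \mathbf{C}\end{bmatrix}\right)+\rank(\mathbf{B})
\end{align}
with $\mathbf{B}=\mathbf{G}_{cse}$, $\mathbf{A}=\mathbf{G}_{ce}$, $\mathbf{C}=\mathbf{G}_{se}$, so that the stacked $[\mathbf{G}_{ce};\mathbf{G}_{cse}]$ and $[\mathbf{G}_{se};\mathbf{G}_{cse}]$ terms cancel against the matching terms in the $\mathbf{H}_c$ and $\mathbf{H}_s$ bounds, leaving coefficient $0$ for $\CMcal{V}_{ce}$, $w_s$ for $\CMcal{V}_{cse}$, and (after a two-case analysis on the sign of $w_s-w_c$) $[w_s-w_c]^+$ for $\CMcal{V}_{se}$. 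This pairing-plus-submodularity step is the missing ingredient in your proposal; the rest (including the final $\rank(\mathbf{G}_j)\le k_j$ step, and the binary power-profile argument you glossed over, which the paper isolates as a separate lemma) is essentially what you wrote.
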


\begin{IEEEproof}
The proof is provided in Appendix~\ref{sec:appen2}.
\end{IEEEproof}

To achieve this bound, a precoder must be designed to exclusively activate the subspaces that contribute positively to the weighted DoF. This includes the aligned subspace $\CMcal{W}_{\text{align}}$ formed by pairing up to $k_{w} = \min\{k_{ce}, k_{se}\}$ basis vectors of $\CMcal{V}_{ce}$ and $\CMcal{V}_{se}$. Let us define the \emph{useful subspace}, $\CMcal{V}_{\text{useful}}$, as the direct sum of these contributing subspaces:
\begin{align}
    \hspace{-0.2em} \CMcal{V}_{\text{useful}} \hspace{-0.1em}\triangleq\hspace{-0.1em} \CMcal{V}_c \hspace{-0.1em}\oplus\hspace{-0.1em} \CMcal{V}_s \hspace{-0.1em}\oplus\hspace{-0.1em} \CMcal{V}_{cs} \hspace{-0.1em}\oplus\hspace{-0.1em} \CMcal{V}_{cse} \hspace{-0.1em}\oplus\hspace{-0.1em} \CMcal{W}_{\text{align}} \hspace{-0.1em}\oplus\hspace{-0.1em} \begin{cases} \CMcal{V}_{se}^{\text{rem}} & \hspace{-0.3em}\text{if } w_s > w_c \\ \{\mathbf{0}\} & \hspace{-0.3em}\text{otherwise} \end{cases},
\end{align}
where $\CMcal{V}_{se}^{\text{rem}} \subseteq \CMcal{V}_{se}$ denotes the unaligned remaining portion of dimension $k_{rem}=[k_{se}-k_{ce}]^+$.

Based on this, the sufficient conditions under which a precoder $\mathbf{F}$ is quasi-optimal are presented as follows: 
\begin{enumerate}
    \item Its column space lies entirely within this useful subspace, i.e., $\CMcal{C}(\mathbf{F}) \subseteq \CMcal{V}_{\text{useful}}$.
    \item Its column space spans the entire useful subspace, i.e., $\rank(\mathbf{F}) = \dim(\CMcal{V}_{\text{useful}})$, and its power allocation is non-degenerate, meaning all of its singular values scale as $\sqrt{P_{\mathrm{tot}}}$.
\end{enumerate}
These conditions imply that to attain the maximum weighted DoF, a precoder must not only span the useful subspace $\CMcal{V}_{\text{useful}}$ but also distribute its power non-degenerately across all of its dimensions, while simultaneously avoiding the harmful subspaces like $\CMcal{V}_e$, $\CMcal{V}_n$, and the unused portion of $\CMcal{V}_{ce}$.

\begin{figure}[t]
  \centering
  \includegraphics[width=0.75\columnwidth]{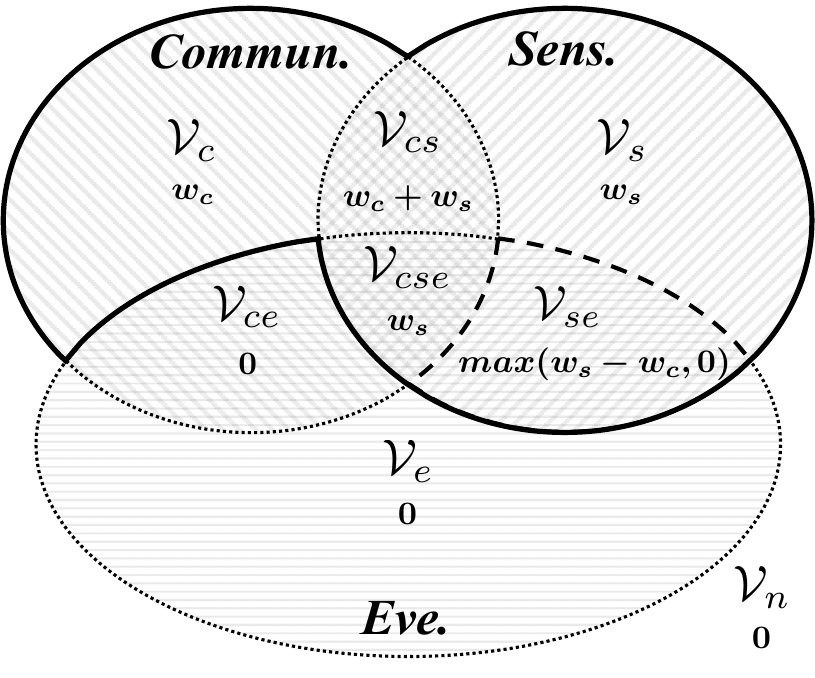}
  \caption{A Venn diagram illustrating the effective DoF weight for each subspace. A quasi-optimal precoder must allocate dominant power to span the regions with positive weights, which constitute the useful space $\CMcal{V}_{\text{useful}}$.}
  \label{fig:DoF_landscape}
\end{figure}

To facilitate understanding of the quasi-optimal precoder structure, Fig.~\ref{fig:DoF_landscape} illustrates the asymptotic DoF gains associated with each subspace. 
The regions with positive weights, along with the aligned subspace $\CMcal{W}_{\text{align}}$, collectively constitute the useful subspace $\CMcal{V}_{\text{useful}}$, which a quasi-optimal precoder must span by allocating dominant power on the order of $O(P_{\mathrm{tot}})$. 
In contrast, the subspaces corresponding to regions with zero or negative weights must be nullified by the precoder to avoid power leakage into directions that do not contribute to or actively penalize the DoF.
Subsequently, we prove that the proposed precoder structure is able to achieve the DoF upper bound.
\begin{proposition}[Achievability of the DoF upper bound] \normalfont
A precoder $\mathbf{F}_{\text{q-opt}}$ satisfying conditions 1) and 2) achieves the DoF upper bound $d_{\text{max}}$.
\end{proposition}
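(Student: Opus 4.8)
The plan is to compute the weighted DoF of $\mathbf{F}_{\text{q-opt}}$ in closed form and show it equals $d_{\text{max}}$; since Theorem~\ref{thm:dof_max} already gives $d(\mathbf{F})\le d_{\text{max}}$ for every precoder, this exact computation simultaneously certifies that $d_{\text{max}}$ is the maximum and that it is attained. By the definition of $d(\mathbf{F})$ as the weighted DoF of the objective in \eqref{eq:prob-log} and linearity of the limit in \eqref{eq:DoF_def}, we have $d(\mathbf{F}_{\text{q-opt}}) = w_c d_c(\mathbf{F}_{\text{q-opt}}) - w_c d_e(\mathbf{F}_{\text{q-opt}}) + w_s d_s(\mathbf{F}_{\text{q-opt}})$, so it suffices to determine the three individual DoFs $d_i(\mathbf{F}_{\text{q-opt}})$ for $i\in\{c,e,s\}$.

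The first step reduces each $d_i$ to a rank. Write $\mathbf{F}_{\text{q-opt}} = \mathbf{W}\mathbf{P}^{1/2}$, where $\mathbf{W}^H\mathbf{W}=\mathbf{I}$ has $\dim\CMcal{V}_{\text{useful}}$ columns spanning $\CMcal{V}_{\text{useful}}$ (conditions 1) and 2) together force $\CMcal{C}(\mathbf{W})=\CMcal{V}_{\text{useful}}$), and $\mathbf{P}=\mathrm{diag}(p_1,\dots)$ has positive entries with $p_k=\Theta(P_{\mathrm{tot}})$. Then $R_i(\mathbf{F}_{\text{q-opt}};P) = \log_2\det(\mathbf{I}+\mathbf{P}^{1/2}\mathbf{M}_i\mathbf{P}^{1/2})$ with $\mathbf{M}_i\triangleq\mathbf{W}^H\mathbf{H}_i^H\mathbf{H}_i\mathbf{W}=(\mathbf{H}_i\mathbf{W})^H(\mathbf{H}_i\mathbf{W})\succeq\mathbf{0}$. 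Since $\mathbf{P}^{1/2}$ is invertible, $\mathbf{P}^{1/2}\mathbf{M}_i\mathbf{P}^{1/2}$ has rank $\rank(\mathbf{M}_i)=\rank(\mathbf{H}_i\mathbf{W})$, and (sandwiching each $p_k$ between $c_{\min}P_{\mathrm{tot}}$ and $c_{\max}P_{\mathrm{tot}}$ and using monotonicity of $\log_2\det(\mathbf{I}+\cdot)$) all of its nonzero eigenvalues grow linearly in $P_{\mathrm{tot}}$. Hence $R_i(\mathbf{F}_{\text{q-opt}};P)=\rank(\mathbf{H}_i\mathbf{W})\log_2 P + O(1)$, i.e., $d_i(\mathbf{F}_{\text{q-opt}})=\rank(\mathbf{H}_i\mathbf{W})$.

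The second step evaluates these ranks. By rank--nullity applied to $\mathbf{H}_i$ restricted to $\CMcal{C}(\mathbf{W})=\CMcal{V}_{\text{useful}}$ (whose kernel is $\CMcal{V}_{\text{useful}}\cap\CMcal{N}_i$), $\rank(\mathbf{H}_i\mathbf{W}) = \dim\CMcal{V}_{\text{useful}} - \dim(\CMcal{V}_{\text{useful}}\cap\CMcal{N}_i)$. The structural ingredient I would invoke, which follows from Theorem~\ref{thm:mems_decomp} and its proof (each $\CMcal{V}_j$ whose label omits index $i$ is by definition contained in $\CMcal{N}_i$, and the dimension count of the eight-way direct sum forces equality), is that $\CMcal{N}_i = \bigoplus_{j\in\mathcal{K}:\, i\notin j}\CMcal{V}_j$ for each $i\in\{c,e,s\}$, identifying a subscript label with the set of channel indices it lists (so $cs\leftrightarrow\{c,s\}$, $n\leftrightarrow\emptyset$). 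Writing $\CMcal{V}_{\text{useful}}=\bigoplus_{j\in\mathcal{U}}\CMcal{V}_j$ for the known index set $\mathcal{U}\subseteq\mathcal{K}$, uniqueness of the decomposition \eqref{eq:mems_spaces} yields $\CMcal{V}_{\text{useful}}\cap\CMcal{N}_i = \bigoplus_{j\in\mathcal{U}:\,i\notin j}\CMcal{V}_j$, hence $\dim(\CMcal{V}_{\text{useful}}\cap\CMcal{N}_i)=\sum_{j\in\mathcal{U}:\,i\notin j}k_j$. Reading off the labels from Table~\ref{tab:subspace_dof} then gives, when $w_s>w_c$ (so $\mathcal{U}=\{c,s,cs,cse,se\}$), $d_c=k_c+k_{cs}+k_{cse}$, $d_e=k_{cse}+k_{se}$, $d_s=k_s+k_{cs}+k_{cse}+k_{se}$; and when $w_s\le w_c$ (so $\mathcal{U}=\{c,s,cs,cse\}$), $d_c=k_c+k_{cs}+k_{cse}$, $d_e=k_{cse}$, $d_s=k_s+k_{cs}+k_{cse}$.

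Finally, substituting into $d(\mathbf{F}_{\text{q-opt}})=w_c d_c - w_c d_e + w_s d_s$, the $\pm w_c k_{cse}$ terms cancel, leaving $w_c k_c + w_s k_s + (w_c+w_s)k_{cs} + w_s k_{cse} + (w_s-w_c)k_{se}$ when $w_s>w_c$ and $w_c k_c + w_s k_s + (w_c+w_s)k_{cs} + w_s k_{cse}$ otherwise; in both regimes this is exactly $d_{\text{max}}$, since $(w_s-w_c)k_{se}$ collapses to $[w_s-w_c]^+k_{se}$. I expect the main obstacle to be the rank evaluation in the second step: because the eight subspaces $\{\CMcal{V}_j\}$ are not mutually orthogonal, one cannot naively claim that $\CMcal{V}_{\text{useful}}\cap\CMcal{N}_i$ simply ``retains'' the $\CMcal{V}_j$ lying inside $\CMcal{N}_i$ and ``drops'' the rest; this must instead be routed through the global direct-sum decomposition \eqref{eq:mems_spaces} together with the null-space decomposition of each $\mathbf{H}_i$, invoking uniqueness of direct-sum representations. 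Once that bookkeeping is established, the high-SNR $\log_2\det$ asymptotics and the weight arithmetic are routine.
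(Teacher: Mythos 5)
Your proposal is correct and follows essentially the same route as the paper's proof: it reduces the weighted DoF of $\mathbf{F}_{\text{q-opt}}$ to the ranks $\rank(\mathbf{H}_i\mathbf{F}_{\text{q-opt}})$, evaluates these as $k_c+k_{cs}+k_{cse}$, $k_{cse}+\mathbb{I}(w_s>w_c)k_{se}$, and $k_s+k_{cs}+k_{cse}+\mathbb{I}(w_s>w_c)k_{se}$, and collects terms to recover $d_{\text{max}}$, exactly as in the paper. The only difference is that you spell out the steps the paper leaves implicit (the $\log_2\det$ asymptotics giving $d_i=\rank(\mathbf{H}_i\mathbf{W})$, and the identification $\CMcal{V}_{\text{useful}}\cap\CMcal{N}_i=\bigoplus_{j\in\mathcal{U}:\,i\notin j}\CMcal{V}_j$ via rank--nullity, the null-space decompositions, and uniqueness of the direct sum), which is a sound and welcome elaboration rather than a different method.
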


\begin{IEEEproof}
    Since $\CMcal{C}(\mathbf{F}_{\text{q-opt}}) = \CMcal{V}_{\text{useful}}$, and its power is allocated non-degenerately, the rank terms are calculated by summing the dimensions of the constituent subspaces of $\CMcal{V}_{\text{useful}}$ that lie within each channel's row space. Incorporating the subspace alignment between $\CMcal{V}_{ce}$ and $\CMcal{V}_{se}$, which contributes $\min\{k_{ce}, k_{se}\}$ dimensions to all receivers, alongside the leftover $\CMcal{V}_{se}$ dimensions when $w_s > w_c$, we obtain:
\begin{align}
    \rank(\mathbf{H}_c\mathbf{F}_{\text{q-opt}}) \hspace{-0.1em}&=\hspace{-0.1em} k_c + k_{cs} + k_{cse} + k_{w}, \\
    \rank(\mathbf{H}_s\mathbf{F}_{\text{q-opt}}) \hspace{-0.1em}&=\hspace{-0.1em} k_s + k_{cs} + k_{cse} + k_{w} + \mathbb{I}(w_s \hspace{-0.1em}>\hspace{-0.1em} w_c) k_{rem}, \\
    \rank(\mathbf{H}_e\mathbf{F}_{\text{q-opt}}) \hspace{-0.1em}&=\hspace{-0.1em} k_{cse} + k_{w} + \mathbb{I}(w_s \hspace{-0.1em}>\hspace{-0.1em} w_c) k_{rem},
\end{align}
where $\mathbb{I}(\cdot)$ denotes the indicator function.
Substituting these evaluations into the weighted DoF formula $d(\mathbf{F}) = w_c \rank(\mathbf{H}_c\mathbf{F}) - w_c \rank(\mathbf{H}_e\mathbf{F}) + w_s \rank(\mathbf{H}_s\mathbf{F})$ and simplifying the terms directly yields $d(\mathbf{F}_{\text{q-opt}}) = d_{\text{max}}$. 
\end{IEEEproof}

It is worth noting that the proposed quasi-optimal precoder achieves the optimal weighted DoF in the high-SNR regime. However, when considering the low-SNR regime, the design criteria may differ substantially, as maximizing DoF is no longer aligned with performance optimization under power-limited conditions. 
Namely, the rate function exhibits approximately linear behavior (i.e., $\log(1+x) \approx x/\ln 2$ for small $x$). 
Applying this in \eqref{eq:prob-log}, the problem simplifies to: 
\begin{align}
    \max_{\mathbf{F}} \quad & \tr\left(\mathbf{F}^H \left(w_c \mathbf{H}_c^H \mathbf{H}_c - w_c \mathbf{H}_e^H \mathbf{H}_e + w_s \mathbf{H}_s^H \mathbf{H}_s\right) \mathbf{F}\right) \\
    \text{s.t.} \quad & \tr(\mathbf{F}\,\mathbf{F}^H) \le P_{\mathrm{tot}}. \nonumber
\end{align}
The well-known solution to this problem is a rank-one precoder, where all power is allocated to the direction of the principal eigenvector of the composite matrix $\mathbf{M} = w_c \mathbf{H}_c^H \mathbf{H}_c - w_c \mathbf{H}_e^H \mathbf{H}_e + w_s \mathbf{H}_s^H \mathbf{H}_s$.

These analyses of the high- and low-SNR regimes reveal two distinct design principles. A quasi-optimal high-SNR precoder must be spatially expansive, allocating power across all dimensions of the useful subspace $\CMcal{V}_{\text{useful}}$ to achieve the maximum DoF. 
Conversely, an optimal low-SNR precoder must be spatially focused, concentrating the entire power budget into the single most effective beamforming direction. 
Accordingly, a practical precoder that operates reliably across a wide range of SNR regimes must possess the flexibility to seamlessly interpolate between these two contrasting behaviors: spatial expansiveness at high SNR and spatial focus at low SNR. 
To address this, the following section presents a practical precoding approach that incorporates the structural insights drawn from both asymptotic regimes.

\section{Practical Precoder Design}\label{sec:algorithm}

In this section, we propose a two-stage iterative algorithm that alternates between basis construction and power allocation to solve \eqref{eq:prob-log}. 
In the first stage, given a fixed power allocation, we sequentially construct a new precoding basis $\mathbf{W}$. In the second stage, given the updated basis, we optimize the power allocation $\mathbf{P}$. The algorithm alternates between these two stages until the objective function in \eqref{eq:prob-log} converges.
We explain the detailed process as follows.

\subsection{Rate Decomposition and Reformulation}

We first introduce a sequential rate-decomposition technique for constructing the basis vectors. A related decomposition was also used in \cite{GaoHybrid2016}.
\begin{proposition}\label{prop:greedy}
    For $i \in \{c, s, e\}$, the rate $R_i(\mathbf{F})$ can be decomposed into a sum of marginal gains from each sequentially constructed stream. For $\mathbf{W}=[\mathbf{w}_1, \dots, \mathbf{w}_{N_s}]$ and $\mathbf{P}=\diag(p_1, \dots, p_{N_s})$ with $\mathbf{F} = \mathbf{W} \mathbf{P}^{1/2}$, we have:
    \begin{align}
        R_i(\mathbf{F}) = \sum_{n=1}^{N_s} \log \left(1+ p_n \mathbf{w}_{n}^{H} \mathbf{G}^{(i)}_{n-1} \mathbf{w}_{n}\right), \label{eq:rate_decompose}
    \end{align}
    where $\mathbf{G}^{(i)}_{n-1}=\mathbf{H}_i^H \left(\mathbf{T}^{(i)}_{n-1}\right)^{-1} \mathbf{H}_i$ is the effective gain matrix for stream $n$ after accounting for the first $n-1$ streams. 
    Here, $\mathbf{T}^{(i)}_{n-1} = \mathbf{I} + \mathbf{H}_i \mathbf{W}_{n-1} \mathbf{P}_{n-1} \mathbf{W}_{n-1}^{H} \mathbf{H}_i^{H}$ represents the residual covariance matrix from the first $n-1$ streams, with $\mathbf{W}_{n-1} = [\mathbf{w}_1, \dots, \mathbf{w}_{n-1}]$ and $\mathbf{P}_{n-1} = \diag(p_1, \dots, p_{n-1})$.
\end{proposition}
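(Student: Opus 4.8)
The plan is to prove the identity \eqref{eq:rate_decompose} by a telescoping argument on the log-determinant, using the standard matrix-determinant lemma to peel off one stream at a time. First I would write the full rate as $R_i(\mathbf{F}) = \log_2\det(\mathbf{I}_{n_i} + \mathbf{H}_i \mathbf{W}\mathbf{P}\mathbf{W}^H \mathbf{H}_i^H)$ using Sylvester's identity $\det(\mathbf{I}_{N_s} + \mathbf{F}^H\mathbf{H}_i^H\mathbf{H}_i\mathbf{F}) = \det(\mathbf{I}_{n_i} + \mathbf{H}_i\mathbf{F}\mathbf{F}^H\mathbf{H}_i^H)$, so that the argument lives in the received-signal space where adding a stream corresponds to a rank-one update. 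Then, since $\mathbf{W}\mathbf{P}\mathbf{W}^H = \sum_{n=1}^{N_s} p_n \mathbf{w}_n\mathbf{w}_n^H$, I can define the partial covariance $\mathbf{T}^{(i)}_{n} = \mathbf{I}_{n_i} + \mathbf{H}_i\mathbf{W}_{n}\mathbf{P}_{n}\mathbf{W}_{n}^H\mathbf{H}_i^H$ (matching the statement, with $\mathbf{T}^{(i)}_0 = \mathbf{I}_{n_i}$ and $\mathbf{T}^{(i)}_{N_s} = \mathbf{I}_{n_i} + \mathbf{H}_i\mathbf{F}\mathbf{F}^H\mathbf{H}_i^H$), and observe the recursion $\mathbf{T}^{(i)}_{n} = \mathbf{T}^{(i)}_{n-1} + p_n (\mathbf{H}_i\mathbf{w}_n)(\mathbf{H}_i\mathbf{w}_n)^H$.

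The key step is the per-stream evaluation of $\log_2\det(\mathbf{T}^{(i)}_{n}) - \log_2\det(\mathbf{T}^{(i)}_{n-1})$. Factoring out $\mathbf{T}^{(i)}_{n-1}$ gives $\det(\mathbf{T}^{(i)}_{n}) = \det(\mathbf{T}^{(i)}_{n-1}) \det\!\big(\mathbf{I}_{n_i} + p_n (\mathbf{T}^{(i)}_{n-1})^{-1}(\mathbf{H}_i\mathbf{w}_n)(\mathbf{H}_i\mathbf{w}_n)^H\big)$, and the matrix-determinant lemma collapses the second factor (a rank-one perturbation of the identity) to the scalar $1 + p_n (\mathbf{H}_i\mathbf{w}_n)^H (\mathbf{T}^{(i)}_{n-1})^{-1}(\mathbf{H}_i\mathbf{w}_n) = 1 + p_n \mathbf{w}_n^H \mathbf{G}^{(i)}_{n-1}\mathbf{w}_n$, where $\mathbf{G}^{(i)}_{n-1} = \mathbf{H}_i^H (\mathbf{T}^{(i)}_{n-1})^{-1}\mathbf{H}_i$ is exactly the effective gain matrix in the statement. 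Telescoping $\log_2\det(\mathbf{T}^{(i)}_{N_s}) = \sum_{n=1}^{N_s}\big[\log_2\det(\mathbf{T}^{(i)}_n) - \log_2\det(\mathbf{T}^{(i)}_{n-1})\big]$ and using $\log_2\det(\mathbf{T}^{(i)}_0) = \log_2\det(\mathbf{I}_{n_i}) = 0$ then yields \eqref{eq:rate_decompose}.

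The argument is essentially routine, so there is no serious obstacle; the only care needed is a minor bookkeeping point. One should confirm invertibility of $\mathbf{T}^{(i)}_{n-1}$ for every $n$: since $\mathbf{T}^{(i)}_{n-1} = \mathbf{I}_{n_i} + (\text{PSD matrix})$, it is positive definite, hence invertible, for any fixed $\mathbf{W}, \mathbf{P}$ with $\mathbf{P}\succeq\mathbf{0}$, so all inverses in the recursion and in $\mathbf{G}^{(i)}_{n-1}$ are well defined. (If some $p_n = 0$, the corresponding term contributes $\log_2 1 = 0$, consistent with the stream adding nothing, and the identity still holds.) I would state the lemma for a generic $i \in \{c,s,e\}$ once, since the three channels enter symmetrically and the derivation is identical in each case.
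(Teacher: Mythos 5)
Your proposal is correct and follows essentially the same route as the paper's proof: a rank-one peeling of the streams via the matrix-determinant lemma applied to $\mathbf{T}^{(i)}_{n}=\mathbf{T}^{(i)}_{n-1}+p_n(\mathbf{H}_i\mathbf{w}_n)(\mathbf{H}_i\mathbf{w}_n)^H$, telescoped over $n$ (the paper peels from $n=N_s$ downward, which is the same argument). Your explicit mention of Sylvester's identity and the positive-definiteness of $\mathbf{T}^{(i)}_{n-1}$ are fine bookkeeping additions but do not change the substance.
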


\begin{IEEEproof}
Split $\mathbf{W}\mathbf{P}\mathbf{W}^H$ into the first $N_s-1$ streams and the $N_s$-th stream as $\mathbf{W}\mathbf{P}\mathbf{W}^H = \mathbf{W}_{N_s-1}\mathbf{P}_{N_s-1}\mathbf{W}_{N_s-1}^H + p_{N_s} \mathbf{w}_{N_s} \mathbf{w}_{N_s}^H$. Then the total rate for link $i$ can be written as:
\begin{align}
    & R_{i}({\mathbf{F}}) = \log\det(\mathbf{T}^{(i)}_{N_s}) = \log\det(\mathbf{I} + \mathbf{H}_i \mathbf{W}\mathbf{P}\mathbf{W}^H \mathbf{H}_i^H) \nonumber \\
    &\hspace{-0.2em}= \log\det(\underbrace{\mathbf{I} + \mathbf{H}_i \mathbf{W}_{N_s-1}\mathbf{P}_{N_s-1}\mathbf{W}_{N_s-1}^H \mathbf{H}_i^H}_{\mathbf{T}^{(i)}_{N_s-1}} 
    \hspace{-0.3em} + p_{N_s} \mathbf{H}_i \mathbf{w}_{N_s} \mathbf{w}_{N_s}^H \mathbf{H}_i^H).
\end{align}
Applying the matrix determinant lemma, we get:
\begin{align}
    R_{i}(\mathbf{F}) = \log\det(\mathbf{T}^{(i)}_{N_s-1}) + \log(1 + p_{N_s} \mathbf{w}_{N_s}^H \mathbf{G}^{(i)}_{N_s-1} \mathbf{w}_{N_s} ).
\end{align}
The term $\log\det(\mathbf{T}^{(i)}_{N_s-1})$ represents the rate from the first $N_s-1$ streams. 
Recursively applying this decomposition from $n=N_s$ down to $1$ yields \eqref{eq:rate_decompose}, with $\mathbf{G}^{(i)}_{n-1} = \mathbf{H}_i^H (\mathbf{T}^{(i)}_{n-1})^{-1} \mathbf{H}_i$.
\end{IEEEproof}

Proposition~\ref{prop:greedy} characterizes the marginal contribution of each sequentially added stream.
Recalling our objective function in \eqref{eq:prob-log}, we apply this decomposition to each rate term. Consequently, the net marginal gain from adding the $n$-th basis vector $\mathbf{w}_n$ with power $p_n$ is given by:
\begin{align}
    R_n \hspace{-0.1em} &= w_c \log \left(1\hspace{-0.05em}+\hspace{-0.05em} p_n \mathbf{w}_{n}^{H} \mathbf{G}^{(c)}_{n-1} \mathbf{w}_{n}\right) \hspace{-0.07em}-\hspace{-0.07em} w_c \log \left(1\hspace{-0.05em}+\hspace{-0.05em} p_n \mathbf{w}_{n}^{H} \mathbf{G}^{(e)}_{n-1} \mathbf{w}_{n}\right) \nonumber \\ 
    &\quad +w_s \log \left(1\hspace{-0.05em}+\hspace{-0.05em} p_n \mathbf{w}_{n}^{H} \mathbf{G}^{(s)}_{n-1} \mathbf{w}_{n}\right). \label{eq:net_marginal}
\end{align}
This decomposition allows us to formulate a tractable subproblem for finding the next basis vector at each step.
Given previously obtained basis vectors $\{\mathbf{w}_1, \dots, \mathbf{w}_{n-1}\}$, we seek a vector ${\mathbf{f}}=\mathbf{w}_{n}$ that maximizes \eqref{eq:net_marginal}.
The vector ${\mathbf{f}}$ must be unit-norm (${\mathbf{f}}^{H} {\mathbf{f}} =1$) and orthogonal to all previously obtained basis vectors ($\mathbf{W}_{n-1}^H\mathbf{f}=\mathbf{0}$). 
Under $\mathbf{f}^H\mathbf{f}=1$, each term satisfies:
\begin{align}
    \log(1+ p_n \mathbf{f}^{H} \mathbf{G}^{(i)}_{n-1} \mathbf{f}) = \log(\mathbf{f}^H(\mathbf{I} + p_n\mathbf{G}^{(i)}_{n-1})\mathbf{f}).
\end{align}
Define $\mathbf{A}^{(i)}_{n-1} \triangleq \mathbf{I} + p_n \mathbf{G}^{(i)}_{n-1}$. Then the basis-update subproblem becomes:
\begin{align}
    \max_{\mathbf{f}} \quad & w_c \log\left(\frac{ \mathbf{f}^H \mathbf{A}^{(c)}_{n-1} \mathbf{f}}{\mathbf{f}^H \mathbf{A}^{(e)}_{n-1} \mathbf{f}}\right) + w_s \log\bigl(\mathbf{f}^H \mathbf{A}^{(s)}_{n-1} \mathbf{f}\bigr) \label{eq:prob_constrained} \\
    \text{s.t.} \quad & \mathbf{f}^H \mathbf{f} = 1, \quad \mathbf{W}_{n-1}^H\mathbf{f}=\mathbf{0}. \nonumber
\end{align}
This problem remains challenging due to the nonconvex objective and the two constraints. 

We next reformulate the problem to eliminate these constraints. Given the constraint $\mathbf{f}^H\mathbf{f}=1$, the sensing term can be rewritten as $\log(\mathbf{f}^H \mathbf{A}^{(s)}_{n-1} \mathbf{f} / \mathbf{f}^H \mathbf{I} \mathbf{f})$ for any feasible solution. Consequently, the objective function becomes scale-invariant with respect to $\mathbf{f}$. Accordingly, the unit-norm constraint can be omitted.
Next, the orthogonality constraint $\mathbf{W}_{n-1}^{H}\mathbf{f} = \mathbf{0}$ implies $\mathbf{f} \in \CMcal{N}(\mathbf{W}_{n-1}^{H})$. Let $\mathbf{\Pi}_{n-1}$ denote the orthogonal projection matrix onto $\CMcal{N}(\mathbf{W}_{n-1}^{H})$. Then, the constraint is equivalently enforced by $\mathbf{f} = \mathbf{\Pi}_{n-1} \mathbf{f}$.
Substituting this into any quadratic form $\mathbf{f}^{H}\mathbf{M}\mathbf{f}$ yields $\mathbf{f}^{H}(\mathbf{\Pi}_{n-1}\mathbf{M}\mathbf{\Pi}_{n-1})\mathbf{f}$. 
Leveraging this property, we incorporate the orthogonality constraint by replacing $\mathbf{A}_{n-1}^{(i)}$ and $\mathbf{I}$ with their projected versions, $\tilde{\mathbf{A}}_{n-1}^{(i)} \triangleq \mathbf{\Pi}_{n-1}\mathbf{A}_{n-1}^{(i)}\mathbf{\Pi}_{n-1}$ and $\tilde{\mathbf{I}}_{n-1} \triangleq \mathbf{\Pi}_{n-1}\mathbf{I}\mathbf{\Pi}_{n-1} = \mathbf{\Pi}_{n-1}$, respectively. 
This allows us to omit the explicit orthogonality constraint without loss of optimality.

These steps yield the following unconstrained formulation:
\begin{align}
    \max_{\mathbf{f}} \quad & w_c \log\left(\frac{ \mathbf{f}^H \tilde{\mathbf{A}}^{(c)}_{n-1} \mathbf{f}}{\mathbf{f}^H \tilde{\mathbf{A}}^{(e)}_{n-1} \mathbf{f}}\right) + w_s \log\left(\frac{\mathbf{f}^H \tilde{\mathbf{A}}^{(s)}_{n-1} \mathbf{f}}{\mathbf{f}^H \tilde{\mathbf{I}}_{n-1} \mathbf{f}}\right). \label{eq:prob-basis}
\end{align} 
The problem in \eqref{eq:prob-basis} forms the core of our sequential basis construction procedure. The process for solving it is detailed in the next subsection.

\subsection{Stage 1: Basis Vector Update} 

The first stage aims to find the updated precoding basis $\mathbf{W}$ under a fixed power allocation $\mathbf{P}$. Based on the sequential rate decomposition in Proposition~\ref{prop:greedy}, the basis vectors are computed one by one. For each $n \in \{1, \dots, N_s\}$, the optimal direction $\mathbf{w}_n$ is found by solving \eqref{eq:prob-basis}. 
As problem \eqref{eq:prob-basis} is nonconvex, we find a stationary point by analyzing its first-order optimality conditions. For notational simplicity in this derivation, we drop the subscript $n-1$ from the matrices. The objective is:
\begin{align}
    J(\mathbf{f}) = w_c \log\left(\frac{ \mathbf{f}^H \tilde{\mathbf{A}}^{(c)} \mathbf{f}}{\mathbf{f}^H \tilde{\mathbf{A}}^{(e)} \mathbf{f}}\right) + w_s \log\left(\frac{\mathbf{f}^H \tilde{\mathbf{A}}^{(s)} \mathbf{f}}{\mathbf{f}^H \tilde{\mathbf{I}} \mathbf{f}}\right).
\end{align}
The first-order stationarity condition is given by $\nabla_{\mathbf{f}^*}J(\mathbf{f})=\mathbf{0}$. By evaluating this gradient, we obtain:
\begin{align}
    \underbrace{ \left( \frac{w_c \tilde{\mathbf{A}}^{(c)}}{\mathbf{f}^H \tilde{\mathbf{A}}^{(c)} \mathbf{f}} + \frac{w_s \tilde{\mathbf{A}}^{(s)}}{\mathbf{f}^H \tilde{\mathbf{A}}^{(s)} \mathbf{f}} \right) }_{\mathbf{B}(\mathbf{f})} \mathbf{f} = \underbrace{ \left( \frac{w_c \tilde{\mathbf{A}}^{(e)}}{\mathbf{f}^H \tilde{\mathbf{A}}^{(e)} \mathbf{f}} + \frac{w_s \tilde{\mathbf{I}}}{\mathbf{f}^H \tilde{\mathbf{I}} \mathbf{f}} \right) }_{\mathbf{C}(\mathbf{f})} \mathbf{f}. \label{eq:KKT_cond}
\end{align}
We note that this equation takes the form $\mathbf{B}(\mathbf{f})\mathbf{f} = \mathbf{C}(\mathbf{f})\mathbf{f}$, where both matrices depend on the vector $\mathbf{f}$. 
Due to this coupling, a closed-form solution is generally intractable. 
To resolve this, we adopt a fixed-point iteration. This approach has also been explored as generalized power iteration in the context of MIMO rate maximization \cite{parkGPIRS23}. Building on this method, we construct our update process as follows: 
\begin{align}
    \mathbf{f} \leftarrow \frac{\left( \mathbf{C}(\mathbf{f}) \right)^{\dagger} \mathbf{B}(\mathbf{f}) \mathbf{f}}{\|\left( \mathbf{C}(\mathbf{f}) \right)^{\dagger} \mathbf{B}(\mathbf{f}) \mathbf{f}\|_2} . \label{eq:iter_matrix}
\end{align}
We repeat \eqref{eq:iter_matrix} until convergence. 

To reduce the computational load and ensure stable convergence across outer iterations, we employ a warm-start strategy for the fixed-point iteration in \eqref{eq:iter_matrix}. 
Specifically, we set the initial iterate $\mathbf{f}^{(0)}$ for stream $n$ using the $n$-th column of $\mathbf{W}^{\mathrm{prev}}$, i.e., the basis matrix obtained at the previous outer iteration.
This vector is projected using the current null-space projector $\mathbf{\Pi}$ to ensure feasibility.
This approach ensures that the basis construction starts in the vicinity of the stationary point found in the previous outer iteration.
Upon convergence, the algorithm yields a vector $\mathbf{w}_n$ that satisfies the first-order optimality conditions, thus providing a stationary solution for the basis construction subproblem. 

Once $\mathbf{w}_n$ is obtained, we update the effective gain matrices using Proposition~\ref{prop:greedy_update}.
To reduce the computational complexity of this process, we present the following proposition, which avoids full matrix inversion. 

\begin{proposition}\label{prop:greedy_update}
    Define the projected effective gain matrix
    \begin{align}
        \tilde{\mathbf{G}}^{(i)}_n \triangleq \mathbf{\Pi}_n \mathbf{G}^{(i)}_n \mathbf{\Pi}_n,\qquad i\in\{c,s,e\}.
    \end{align}
    Then $\tilde{\mathbf{G}}^{(i)}_n$ can be updated recursively from $\tilde{\mathbf{G}}^{(i)}_{n-1}$ without a full matrix inversion. Given the $n$-th basis vector $\mathbf{w}_n$ and its power $p_n$, the update is:
    \begin{align}
        \tilde{\mathbf{G}}^{(i)}_{n} = \mathbf{\Pi}_{n}
        \Bigl(\tilde{\mathbf{G}}^{(i)}_{n-1} - \frac{(\tilde{\mathbf{G}}^{(i)}_{n-1}\mathbf{w}_{n})(\tilde{\mathbf{G}}^{(i)}_{n-1}\mathbf{w}_{n})^{H}} {1/p_{n} + \mathbf{w}_{n}^{H}\tilde{\mathbf{G}}^{(i)}_{n-1}\mathbf{w}_{n}} \Bigr) 
        \mathbf{\Pi}_{n}. \label{eq:G_update}
    \end{align}
\end{proposition}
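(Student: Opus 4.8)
\textbf{Proof plan for Proposition~\ref{prop:greedy_update}.}

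The plan is to start from the definition of the unprojected effective gain matrix $\mathbf{G}^{(i)}_{n}=\mathbf{H}_i^H(\mathbf{T}^{(i)}_{n})^{-1}\mathbf{H}_i$ and track how the residual covariance $\mathbf{T}^{(i)}_{n}$ changes when the $n$-th stream is appended. Since $\mathbf{T}^{(i)}_{n}=\mathbf{T}^{(i)}_{n-1}+p_n\mathbf{H}_i\mathbf{w}_n\mathbf{w}_n^H\mathbf{H}_i^H$ is a rank-one update of $\mathbf{T}^{(i)}_{n-1}$, I would apply the Sherman--Morrison identity to obtain
\begin{align}
    (\mathbf{T}^{(i)}_{n})^{-1}
    = (\mathbf{T}^{(i)}_{n-1})^{-1}
    - \frac{p_n (\mathbf{T}^{(i)}_{n-1})^{-1}\mathbf{H}_i\mathbf{w}_n\mathbf{w}_n^H\mathbf{H}_i^H (\mathbf{T}^{(i)}_{n-1})^{-1}}{1 + p_n\mathbf{w}_n^H\mathbf{H}_i^H(\mathbf{T}^{(i)}_{n-1})^{-1}\mathbf{H}_i\mathbf{w}_n}. \nonumber
\end{align}
Pre- and post-multiplying by $\mathbf{H}_i^H$ and $\mathbf{H}_i$, and recognizing $\mathbf{G}^{(i)}_{n-1}=\mathbf{H}_i^H(\mathbf{T}^{(i)}_{n-1})^{-1}\mathbf{H}_i$, turns the numerator into $(\mathbf{G}^{(i)}_{n-1}\mathbf{w}_n)(\mathbf{G}^{(i)}_{n-1}\mathbf{w}_n)^H$ and the denominator into $1+p_n\mathbf{w}_n^H\mathbf{G}^{(i)}_{n-1}\mathbf{w}_n$, giving the recursion $\mathbf{G}^{(i)}_{n}=\mathbf{G}^{(i)}_{n-1}-p_n(\mathbf{G}^{(i)}_{n-1}\mathbf{w}_n)(\mathbf{G}^{(i)}_{n-1}\mathbf{w}_n)^H/(1+p_n\mathbf{w}_n^H\mathbf{G}^{(i)}_{n-1}\mathbf{w}_n)$; dividing numerator and denominator by $p_n$ yields the $1/p_n$ form in \eqref{eq:G_update}.

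The remaining step is to pass from the unprojected recursion to the projected one. Here I would use the key observation that $\mathbf{w}_n$ lies in the range of $\mathbf{\Pi}_{n-1}$ (it was constructed orthogonal to $\mathbf{w}_1,\dots,\mathbf{w}_{n-1}$), so $\mathbf{\Pi}_{n-1}\mathbf{w}_n=\mathbf{w}_n$, and therefore $\tilde{\mathbf{G}}^{(i)}_{n-1}\mathbf{w}_n=\mathbf{\Pi}_{n-1}\mathbf{G}^{(i)}_{n-1}\mathbf{\Pi}_{n-1}\mathbf{w}_n=\mathbf{\Pi}_{n-1}\mathbf{G}^{(i)}_{n-1}\mathbf{w}_n$ and $\mathbf{w}_n^H\tilde{\mathbf{G}}^{(i)}_{n-1}\mathbf{w}_n=\mathbf{w}_n^H\mathbf{G}^{(i)}_{n-1}\mathbf{w}_n$. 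Sandwiching the unprojected recursion between $\mathbf{\Pi}_n$ on both sides and using $\mathbf{\Pi}_n\mathbf{\Pi}_{n-1}=\mathbf{\Pi}_n$ (since $\CMcal{N}(\mathbf{W}_n^H)\subseteq\CMcal{N}(\mathbf{W}_{n-1}^H)$) lets me replace every occurrence of $\mathbf{G}^{(i)}_{n-1}$ adjacent to an outer $\mathbf{\Pi}_n$ by its projected version $\tilde{\mathbf{G}}^{(i)}_{n-1}$, at the cost of inserting extra $\mathbf{\Pi}_{n-1}$ factors that are absorbed by the identities just noted, producing exactly \eqref{eq:G_update}.

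The main obstacle is the bookkeeping in the last step: one must be careful that the nesting $\CMcal{N}(\mathbf{W}_n^H)\subseteq\CMcal{N}(\mathbf{W}_{n-1}^H)$ gives $\mathbf{\Pi}_n\mathbf{\Pi}_{n-1}=\mathbf{\Pi}_{n-1}\mathbf{\Pi}_n=\mathbf{\Pi}_n$, and that the rank-one correction term genuinely only involves $\mathbf{G}^{(i)}_{n-1}$ through the combination $\mathbf{G}^{(i)}_{n-1}\mathbf{w}_n$, which is invariant under left-multiplication by $\mathbf{\Pi}_{n-1}$ up to the outer projector. Once these projector identities are in place, the substitution is mechanical and the recursion closes in terms of $\tilde{\mathbf{G}}^{(i)}_{n-1}$ alone, avoiding any $n_t\times n_t$ inversion. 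I would also note in passing that $\mathbf{T}^{(i)}_{n-1}$ is positive definite, so the Sherman--Morrison denominator $1+p_n\mathbf{w}_n^H\mathbf{G}^{(i)}_{n-1}\mathbf{w}_n$ is strictly positive and the formula is always well defined.
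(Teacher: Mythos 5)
Your proposal is correct and follows essentially the same route as the paper: a Sherman--Morrison rank-one update of $\mathbf{T}^{(i)}_{n-1}$ sandwiched by $\mathbf{H}_i^H(\cdot)\mathbf{H}_i$ to get the unprojected recursion for $\mathbf{G}^{(i)}_n$, followed by projecting with $\mathbf{\Pi}_n$ and using $\mathbf{\Pi}_{n-1}\mathbf{w}_n=\mathbf{w}_n$ together with $\mathbf{\Pi}_n\mathbf{\Pi}_{n-1}=\mathbf{\Pi}_n$ to close the recursion in terms of $\tilde{\mathbf{G}}^{(i)}_{n-1}$. The only addition beyond the paper's argument is your observation that positive definiteness of $\mathbf{T}^{(i)}_{n-1}$ keeps the denominator strictly positive, which is a harmless and sensible remark.
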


\begin{IEEEproof}
The matrix $\mathbf{T}^{(i)}_{n}$ at step $n$ is a rank-1 update of the matrix at step $n-1$:
\begin{align}
    \mathbf{T}^{(i)}_{n} = \mathbf{T}^{(i)}_{n-1} + p_n \mathbf{H}_i \mathbf{w}_n \mathbf{w}_n^H \mathbf{H}_i^H.
\end{align}
Applying the Sherman-Morrison formula and then pre- and post-multiplying by $\mathbf{H}_i^H$ and $\mathbf{H}_i$, respectively, yields:
\begin{align}
    \mathbf{G}^{(i)}_{n} = \mathbf{G}^{(i)}_{n-1} - \frac{ \mathbf{G}^{(i)}_{n-1} \mathbf{w}_n \mathbf{w}_n^H \mathbf{G}^{(i)}_{n-1} }{1/p_n + \mathbf{w}_n^H \mathbf{G}^{(i)}_{n-1} \mathbf{w}_n}. \label{eq:G_update_simple}
\end{align}
Next, project \eqref{eq:G_update_simple} onto $\CMcal{N}(\mathbf{W}_n^H)$. Since $\mathbf{w}_n \in \CMcal{N}(\mathbf{W}_{n-1}^H)$, we have $\mathbf{\Pi}_{n-1}\mathbf{w}_n = \mathbf{w}_n$. Moreover, $\CMcal{N}(\mathbf{W}_n^H) \subseteq \CMcal{N}(\mathbf{W}_{n-1}^H)$ implies $\mathbf{\Pi}_{n}=\mathbf{\Pi}_{n}\mathbf{\Pi}_{n-1}$. Then we get:
\begin{align}
    \tilde{\mathbf{G}}^{(i)}_{n} &= \mathbf{\Pi}_{n} \mathbf{\Pi}_{n-1}\left( \mathbf{G}^{(i)}_{n-1} - \frac{ \mathbf{G}^{(i)}_{n-1} \mathbf{w}_n \mathbf{w}_n^H \mathbf{G}^{(i)}_{n-1} }{1/p_n + \mathbf{w}_n^H \mathbf{G}^{(i)}_{n-1} \mathbf{w}_n} \right) \mathbf{\Pi}_{n-1} \mathbf{\Pi}_{n} \nonumber \\
    &= \mathbf{\Pi}_{n} \left( \tilde{\mathbf{G}}^{(i)}_{n-1} - \frac{ \tilde{\mathbf{G}}^{(i)}_{n-1} \mathbf{w}_n \mathbf{w}_n^H \tilde{\mathbf{G}}^{(i)}_{n-1} }{1/p_n + \mathbf{w}_n^H \tilde{\mathbf{G}}^{(i)}_{n-1} \mathbf{w}_n} \right) \mathbf{\Pi}_{n}.
\end{align}
\end{IEEEproof}

By using Proposition~\ref{prop:greedy_update}, we can compute $\tilde{\mathbf{A}}^{(i)}_{n}$ efficiently as $\tilde{\mathbf{A}}^{(i)}_{n} = \tilde{\mathbf{I}}_{n} + p_{n+1} \tilde{\mathbf{G}}^{(i)}_{n}$. The resulting procedure is detailed in Algorithm \ref{alg:basisUpdate}.

\begin{algorithm}[t]
    \caption{Iterative Basis Vector Update}
    \label{alg:basisUpdate}
    \KwInput{Channel matrices $\mathbf{H}_c, \mathbf{H}_s, \mathbf{H}_e$; power allocation vector $\mathbf{p}$; weights $w_c, w_s$; number of streams $N_s$; tolerance $\varepsilon$; previous basis matrix $\mathbf{W}^{\mathrm{prev}}=[\mathbf{w}^{\mathrm{prev}}_1, \dots, \mathbf{w}^{\mathrm{prev}}_{N_s}]$}
    \KwOutput{Basis matrix $\mathbf{W} = [\mathbf{w}_1, \dots, \mathbf{w}_{N_s}]$}
    
    Initialize $\mathbf{\Pi} \leftarrow \mathbf{I}$, $\tilde{\mathbf{G}}^{(i)} \leftarrow \mathbf{H}_i^H \mathbf{H}_i$ for $i\in \{c,s,e\}$.
    
    \For{$n \leftarrow 1$ \KwTo $N_s$}{
        Initialize $\mathbf{f}^{(0)} \leftarrow \mathbf{\Pi}\mathbf{w}^{\mathrm{prev}}_n / \|\mathbf{\Pi} \mathbf{w}^{\mathrm{prev}}_n\|_2$ and $k \leftarrow 0$. \\
        Update $\tilde{\mathbf{I}} \leftarrow \mathbf{\Pi}$, $\tilde{\mathbf{A}}^{(i)} \leftarrow \tilde{\mathbf{I}} + p_n \tilde{\mathbf{G}}^{(i)}$ for $i\in \{c,s,e\}$.
        
        \Repeat{$\|\mathbf{f}^{(k)} - \mathbf{f}^{(k-1)}\|_2 < \varepsilon$}{
            $k \leftarrow k+1$. \\
            Compute matrices $\mathbf{B}$ and $\mathbf{C}$ following \eqref{eq:KKT_cond}. \\
            Update $\mathbf{f}^{(k)} \leftarrow \mathbf{C}^{\dagger}\mathbf{B} \mathbf{f}^{(k-1)} / \|\mathbf{C}^{\dagger}\mathbf{B} \mathbf{f}^{(k-1)}\|_2$.
        }
        
        Set $n$-th basis vector: $\mathbf{w}_n \leftarrow \mathbf{f}^{(k)}$.

        Update projection matrix: $\mathbf{\Pi} \leftarrow \mathbf{\Pi} - \mathbf{w}_n\mathbf{w}_n^H$.
        
        Update projected effective gain matrices $\tilde{\mathbf{G}}^{(c)}, \tilde{\mathbf{G}}^{(s)}, \tilde{\mathbf{G}}^{(e)}$ using Proposition~\ref{prop:greedy_update}. \\
    }
    \Return{$\mathbf{W} = [\mathbf{w}_1, \dots, \mathbf{w}_{N_s}]$}
\end{algorithm}

\subsection{Stage 2: Power Allocation}
After Stage 1 yields a basis matrix $\mathbf{W}$, Stage 2 optimizes 
the power allocation vector $\mathbf{p} = [p_1, \dots, p_{N_s}]^T$ 
over the corresponding fixed basis vectors:
\begin{align}
    \max_{\{p_k\}} \; & w_c \log\det(\mathbf{I} \hspace{-0.05em}+\hspace{-0.05em} \mathbf{K}_c\diag(\mathbf{p})) \hspace{-0.07em}-\hspace{-0.07em} w_c \log\det(\mathbf{I} \hspace{-0.05em}+\hspace{-0.05em} \mathbf{K}_e\diag(\mathbf{p})) 
     \nonumber \\
    & + w_s \log\det(\mathbf{I} \hspace{-0.05em}+\hspace{-0.05em} \mathbf{K}_s\diag(\mathbf{p}))\label{eq:power_prob} \\
    \text{s.t.} \quad & \mathbf{p} \ge 0, \quad \mathbf{1}^T\mathbf{p} \le P_{\mathrm{tot}}. \nonumber
\end{align}
where $\mathbf{K}_i = \mathbf{W}^H \mathbf{H}_i^H \mathbf{H}_i \mathbf{W}$ for $i \in \{c, s, e\}$. 
The objective function is composed of two concave terms and one convex term in $\mathbf{p}$ (the negative eavesdropper's MI). This structure makes the overall problem a nonconvex DC program.

To solve \eqref{eq:power_prob}, we employ SCA for the DC objective. 
At each iteration, we linearize the challenging convex part of the objective using its first-order Taylor expansion around the power allocation $\mathbf{p}^{(k-1)}$ from the previous iteration. 
Substituting this linear approximation into the objective function and discarding constants independent of $\mathbf{p}$ yields a convex subproblem at iteration $k$. 
This subproblem admits a concave surrogate objective that globally lower-bounds the original objective and is tight at $\mathbf{p}^{(k-1)}$. 
It can be solved efficiently via projected gradient ascent. Iteratively maximizing this lower bound yields a monotonic improvement of the objective value and converges to a stationary point under standard regularity conditions.
Similar to Stage 1, we employ a warm-start strategy for the initialization of the SCA procedure. Specifically, instead of resetting to a uniform power allocation at each outer iteration, we set the initial point for the current stage, $\mathbf{p}^{(0)}$, to the converged power vector obtained in the previous outer iteration. The resulting procedure is detailed in Algorithm \ref{alg:power_allocation}.

The overall precoder design process is carried out as follows. 
First, Stage 1 is executed to update the precoding basis $\mathbf{W}$ for the given power. Then, using this new basis, Stage 2 is performed to re-optimize the power allocation $\mathbf{P}$. 
This two-stage cycle is repeated until the weighted sum rate in \eqref{eq:prob-log} converges, ensuring a joint optimization of both the precoding basis and the power allocation.

\begin{algorithm}[t]
    \caption{Power Allocation for Fixed Basis}
    \label{alg:power_allocation}
    \KwInput{Channel matrices $\mathbf{H}_c, \mathbf{H}_s, \mathbf{H}_e$; basis matrix $\mathbf{W}$; weights $w_c, w_s$; total power $P_{\mathrm{tot}}$; tolerance $\varepsilon$; previous power vector $\mathbf{p}^{\mathrm{prev}}$}
    \KwOutput{Power allocation vector $\mathbf{p}$}
    
    \text{Initialize } $\mathbf{p}^{(0)} \leftarrow \mathbf{p}^{\mathrm{prev}}$ and $k \leftarrow 0$. \\
    \text{Compute } $\mathbf{K}_i \leftarrow \mathbf{W}^{H} \mathbf{H}_i^{H} \mathbf{H}_i \mathbf{W}$ for $i \in \{c, s, e\}$.
    
    \Repeat{$\|\mathbf{p}^{(k)} - \mathbf{p}^{(k-1)}\|_2 < \varepsilon$}{
        $k \leftarrow k+1$ \\
        Compute the gradient: $\mathbf{k}_e^{(k-1)} \leftarrow \frac{1}{\ln 2}\diag \big( (\mathbf{I} + \mathbf{K}_e \diag(\mathbf{p}^{(k-1)}))^{-1} \mathbf{K}_e \big)$ \\
        
        Solve the convex subproblem to obtain $\mathbf{p}^{(k)}$: \\
        $\displaystyle \mathbf{p}^{(k)} \leftarrow \argmax_{\mathbf{p} \ge \mathbf{0},\, \mathbf{1}^{T}\mathbf{p} \leq P_{\mathrm{tot}}}  w_c \log_2\det(\mathbf{I} + \mathbf{K}_c \diag(\mathbf{p})) $ \\
        $\displaystyle + w_s \log_2\det(\mathbf{I} + \mathbf{K}_s \diag(\mathbf{p})) - w_c \, (\mathbf{k}_e^{(k-1)})^{T} \mathbf{p}$
    }
    \Return{$\mathbf{p}^{(k)}$}
\end{algorithm}

\subsection{Discussions} \label{sec:asymptotic_optimality}
The proposed precoding method appropriately incorporates the optimal precoder structure analyzed in the previous section. 
At low SNR, the basis construction stage would find the principal eigenvector of the composite matrix $\mathbf{M} = w_c \mathbf{H}_c^H \mathbf{H}_c - w_c \mathbf{H}_e^H \mathbf{H}_e + w_s \mathbf{H}_s^H \mathbf{H}_s$ as the first basis vector. 
Subsequently, the power allocation stage correctly allocates the entire power budget to this single stream, ensuring the overall algorithm converges to the globally optimal rank-one precoder. 
In the high-SNR regime, the allocated power $p_n$ for any useful stream is large, causing the matrices $\mathbf{A}^{(i)}_{n-1}$ in our objective to be dominated by the effective channel gain matrix, i.e., $\mathbf{A}^{(i)}_{n-1} \approx p_n \mathbf{G}^{(i)}_{n-1}$. 
In this case, the fixed-point iteration in \eqref{eq:iter_matrix} seeks a direction $\mathbf{f}$ that maximizes the gains from the communication and sensing channels while minimizing leakage to the eavesdropper. Specifically, the matrix $\mathbf{B}(\mathbf{f})$ combines the effective channel gain matrices $\mathbf{G}^{(c)}$ and $\mathbf{G}^{(s)}$, thereby amplifying vector components within the communication and sensing row spaces.
In contrast, the use of $(\mathbf{C}(\mathbf{f}))^{\dagger}$ effectively penalizes directions that are strong in the eavesdropper's channel, thereby promoting solutions that lie within or near its null space. This iterative process serves as a numerical method for finding directions with the highest directional DoF weight. 
This is designed to sequentially populate the basis vectors that span $\CMcal{V}_{\text{useful}}$. 
These observations suggest that the proposed algorithm is well-founded and capable of recovering optimal solutions in key asymptotic regimes. 
Additionally, this two-stage architecture is designed to effectively embody the structural insights derived from our analysis. By explicitly separating the spatial basis construction from the power distribution, the algorithm can focus on identifying the most beneficial signal directions—those that balance communication and sensing gains against eavesdropping leakage—without being hindered by the coupling with power variables. Moreover, this decoupling transforms the challenging joint optimization problem into a sequence of computationally efficient subproblems. This ensures high scalability, making the proposed design well-suited for practical implementation in large-scale antenna systems.

\section{Simulation Results}\label{sec:simulations}
In this section, we present numerical results evaluating the proposed two-stage precoder design for the MIMO-ME-MS channel. We first illustrate the secrecy–sensing tradeoff region (Figs.~\ref{fig:pareto_0db}–\ref{fig:pareto_20db}), then plot the weighted sum rate versus SNR (Fig.~\ref{fig:sumrate_snr}), and finally compare computational complexity (Fig.~\ref{fig:cpu_time}).

\subsection{Simulation Setup and Baselines}
We consider Rayleigh fading channels with entries distributed as $\mathcal{CN}(0,1)$, and we normalize the noise variances so that the SNR is controlled by $P_{\mathrm{tot}}$. For the achievable-region experiments (Figs.~\ref{fig:pareto_0db}–\ref{fig:pareto_20db}), we set $n_t=n_c=n_e=n_s=16$ and average the results over multiple independent channel realizations.
We set the number of data streams to $N_s=2$ at low SNR (0 dB) and $N_s=12$ at high SNR (20 dB), 
following standard stream allocation practices in MIMO systems.
To trace the secrecy--sensing Pareto boundary, we sweep $w_c\in[0,1]$ with $w_s=1-w_c$.

We compare the proposed two-stage algorithm with the following baselines: 
\begin{itemize}
  \item \textbf{WMMSE-based ISAC precoding \cite{wangUnifiedISACPareto2024}}:
  This baseline solves the MIMO-MS weighted sum rate maximization problem based on the communication MI and the SMI (i.e., the communication--sensing tradeoff without an eavesdropper) using a WMMSE-based procedure. We then evaluate the secrecy rate and SMI achieved by the resulting precoder.

  \item \textbf{GSVD-based secrecy precoding \cite{KhistiMIMOME2010}}:
  This baseline designs the precoder for the MIMO-ME wiretap channel using the classical GSVD-based structure. Since it does not account for sensing, we evaluate its SMI under the resulting secrecy-oriented precoder.

  \item \textbf{SCA-SDR-based precoding}: 
    To the best of our knowledge, no existing baseline directly addresses precoder optimization under the considered MIMO-ME-MS setup. 
    For this reason, as a baseline aligned with our joint objective, we construct an SCA-SDR-based precoding by adapting and modifying the formulation in \cite{li:tvt:25}. Specifically, since the secrecy rate induces a DC structure in the objective, we employ the SCA technique to linearize the nonconvex terms. By linearizing the eavesdropper log-det term using the Taylor approximation and relaxing the rank constraint (i.e., SDR), the problem is transformed into a sequence of convex SDP subproblems. Each SDP subproblem is solved using a generic interior-point solver (via CVX). 
    Upon convergence, we recover a rank-$N_s$ precoder from the optimized covariance using eigen-decomposition.
    \end{itemize}

We also include (i) a sensing upper bound (SUB) obtained by maximizing SMI subject to the power constraint, which corresponds to the sensing-only operating point, and (ii) a time-sharing baseline constructed by the convex combination of the secrecy-only (GSVD) and sensing-only (SUB) designs.

\subsection{Numerical Results}

\begin{figure}[t]
  \centering
  \includegraphics[width=\columnwidth]{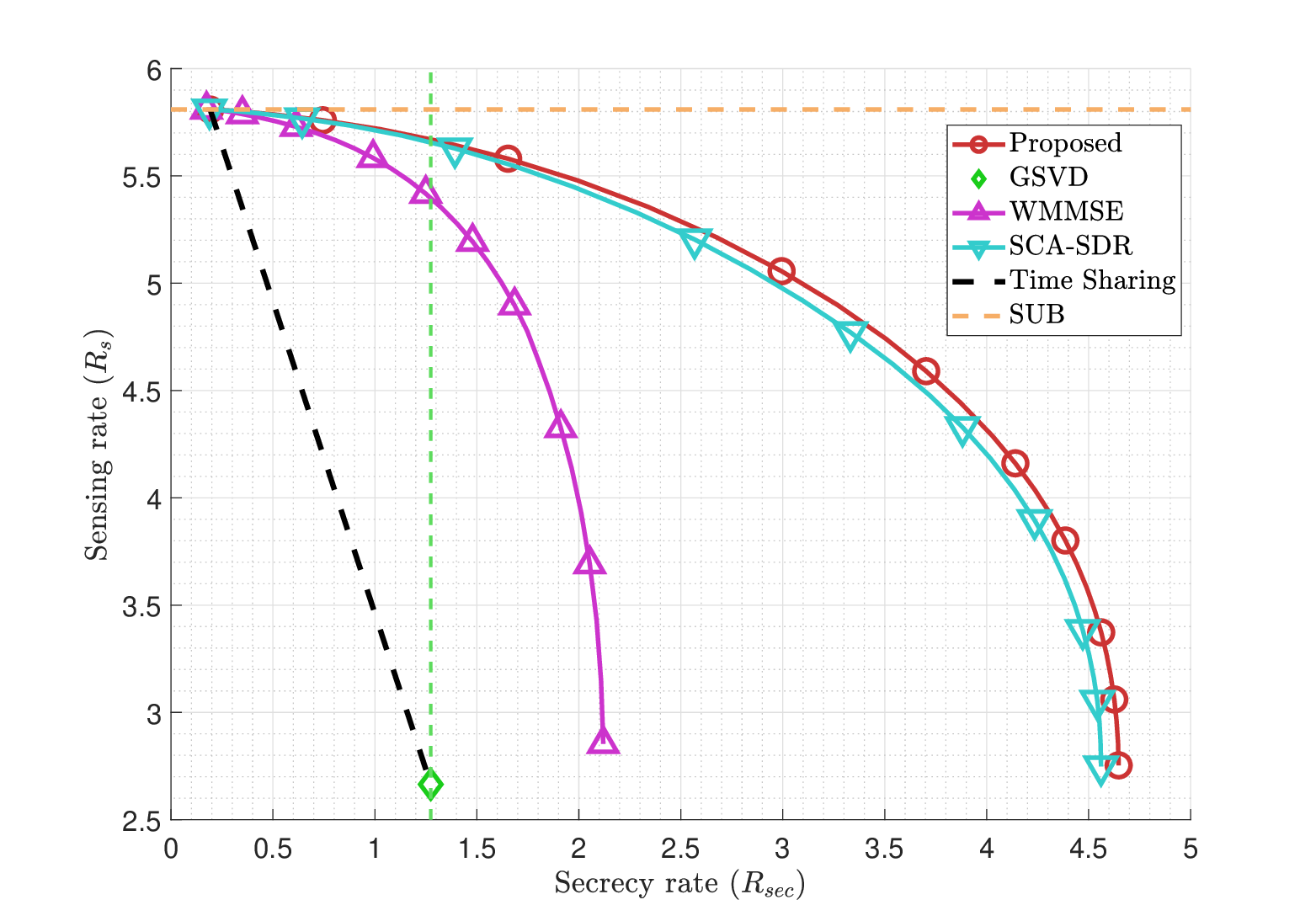}
  \caption{Achievable region of $(R_{\mathrm{sec}},\, R_s)$ at 0 dB SNR.}
  \label{fig:pareto_0db}
\end{figure}

\begin{figure}[t]
  \centering
  \includegraphics[width=\columnwidth]{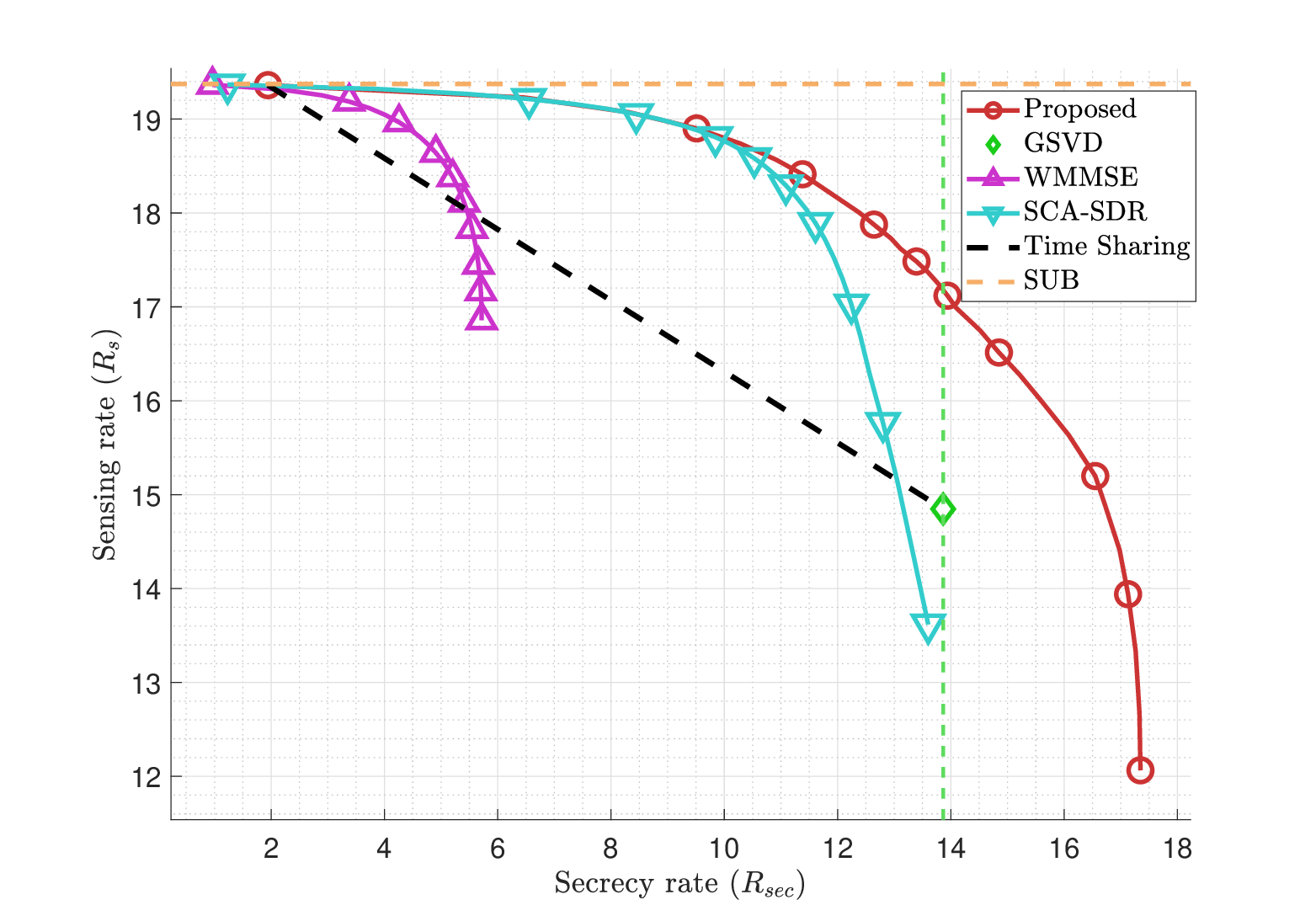}
  \caption{Achievable region of $(R_{\mathrm{sec}},\, R_s)$ at 20 dB SNR.}
  \label{fig:pareto_20db}
\end{figure}

Figs.~\ref{fig:pareto_0db} and \ref{fig:pareto_20db} illustrate the achievable Pareto boundaries between the secrecy rate $R_{\mathrm{sec}}$ and the SMI $R_s$ at low (0~dB) and high (20~dB) SNR, respectively.

In Fig.~\ref{fig:pareto_0db} (0 dB SNR), several key observations can be made. 
First, the proposed method significantly outperforms the naive time-sharing baseline, demonstrating the substantial gains from the joint optimization of secrecy and sensing objectives. 
Second, the WMMSE-based precoding, while achieving a strong communication--sensing tradeoff, cannot incorporate the eavesdropper and consequently achieves a lower secrecy rate than the proposed method for any given level of SMI. 
The GSVD-based precoding, designed solely to maximize the secrecy rate, does not consider sensing, and its performance is therefore independent of the system weights; it is thus represented by a single operating point. Most notably, at this low SNR, this GSVD-based point is suboptimal even in terms of secrecy rate. 
Compared to the SCA-SDR precoding, which jointly accounts for the secrecy rate and the SMI, the proposed method still achieves better performance. 
The rationale behind these performance gains is discussed below.

Fig.~\ref{fig:pareto_20db} presents the same tradeoff at 20 dB SNR. 
The Pareto boundary traced by the proposed method again demonstrates significant gains over the naive time-sharing baseline. 
While the secrecy rate achieved by the GSVD-based precoding improves relative to the low-SNR case, it remains suboptimal even at 20 dB. This highlights that for the GSVD-based precoding to approach its theoretical asymptotic optimality for the secrecy-only objective, an SNR regime far greater than 20 dB would be required, which is often impractical. 
In contrast, the proposed method demonstrates robust and superior performance in this practical high-SNR regime.
Furthermore, the limitation of the secrecy-agnostic WMMSE-based precoding becomes more pronounced in this high-SNR regime. Since this method maximizes the sum of communication MI and the SMI without penalizing the leakage to the eavesdropper, the unsuppressed signal power received by the eavesdropper grows significantly with SNR. Consequently, neglecting the eavesdropper term leads to a much more severe degradation in secrecy rate at 20 dB compared to the low-SNR case, highlighting the necessity of the proposed joint security-aware design.
Additionally, the SCA-SDR-based precoding exhibits a substantial performance gap relative to the proposed method at 20 dB, particularly in the high-secrecy region. 
These performance gains over the SCA-SDR-based precoding stem from the following factors. The proposed method sequentially finds the precoding basis by reflecting the useful subspaces identified in our analysis. 
Consequently, the resulting transmit subspace remains structurally intact without resorting to rank reduction. 
By contrast, the SCA-SDR-based precoding does not explicitly leverage such subspace-level insights and instead relies on SDR followed by rank-$N_s$ extraction. 
This can lead to performance degradation.
This confirms that the proposed method is particularly beneficial for achieving the maximum secrecy DoF at high SNR.
In summary, the proposed method consistently outperforms the baselines across both low- and high-SNR regimes, demonstrating effective operation over a wide range of SNR conditions.

Fig.~\ref{fig:sumrate_snr} plots the weighted sum rate ($w_c R_{\mathrm{sec}} + w_s R_s$ with $w_c=w_s=0.5$) versus SNR for different antenna configurations ($n_t=n_c=n_e=n_s \in \{16, 32, 64\}$), where the number of data streams is set to $N_s = n_t/2$. The SCA-SDR-based precoding was excluded from this simulation due to its high computational complexity associated with the $n_t \times n_t$ SDP variable, which becomes prohibitive as the number of antennas increases (see Fig.~\ref{fig:cpu_time}).

\begin{figure}[t]
  \centering
  \includegraphics[width=\columnwidth]{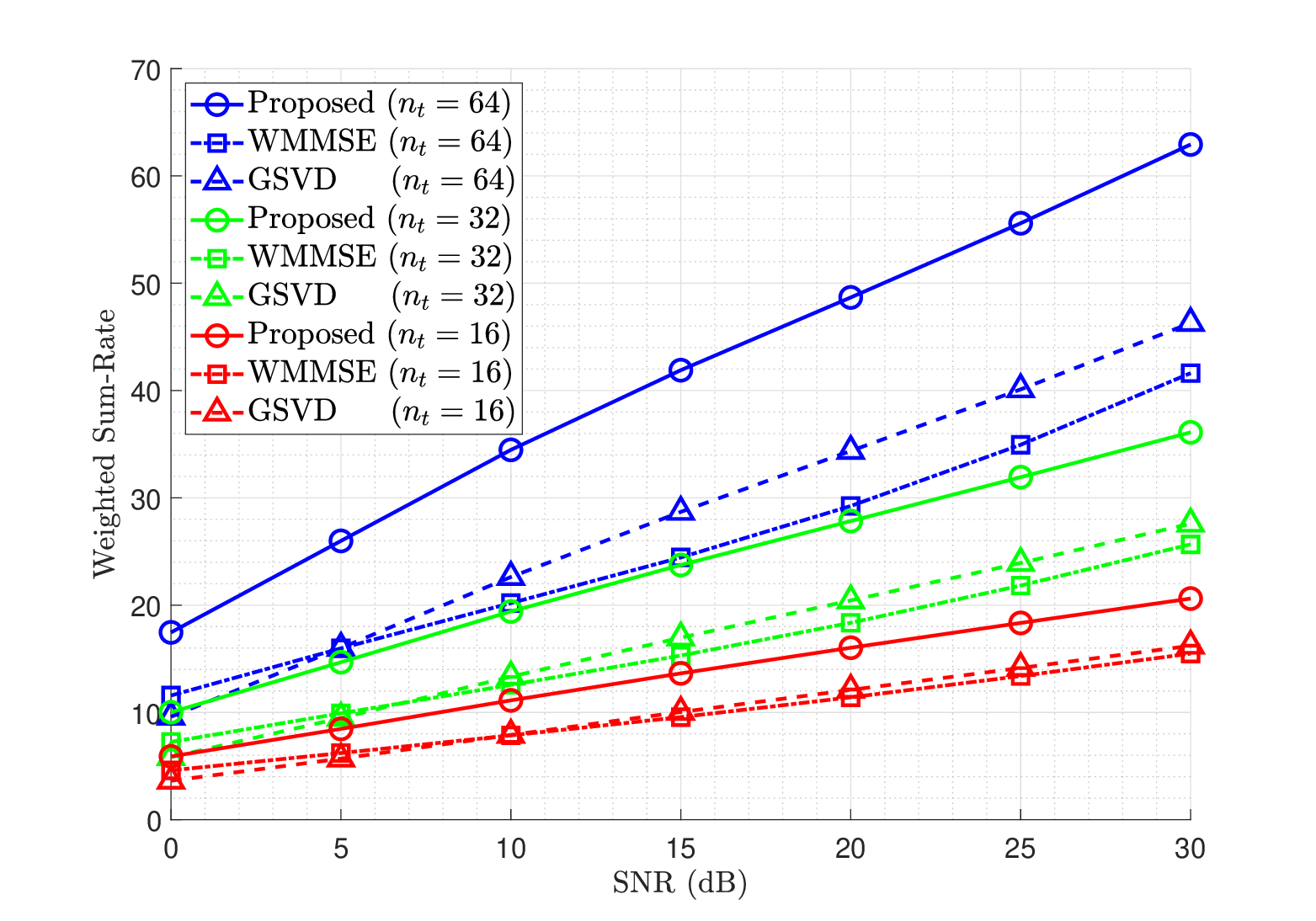}
  \caption{Weighted sum rate $w_c R_{\mathrm{sec}} + w_s R_s$ versus SNR for different numbers of antennas ($w_c=w_s=0.5$).}
  \label{fig:sumrate_snr}
\end{figure}

The results clearly show that for all antenna configurations, the proposed method consistently outperforms the other baseline schemes across the entire SNR range. 
Notably, at high SNR, the WMMSE-based precoding performs even worse than the sensing-agnostic GSVD-based precoding. This indicates that without explicit eavesdropper nulling, the severe leakage penalty outweighs the multiplexing gains from the large array, rendering the secrecy-agnostic design ineffective.
Furthermore, the performance gap between the proposed method and the baselines widens as the number of antennas increases. 
This is because a larger antenna array provides a higher-dimensional transmit space, requiring a more delicate consideration of the subspaces for both secrecy and sensing---a task for which the GSVD- and WMMSE-based precoders are ill-equipped, but which the proposed method is explicitly designed to handle. 
This highlights the increasing importance of the proposed joint design approach in systems with larger antenna arrays.

\subsection{Complexity Analysis} \label{sec:complexity}
We characterize the computational complexity of the proposed method and the SCA-SDR-based precoding in big-$O$ notation. Additionally, we corroborate the scaling trends via average CPU execution time measurements.

The proposed two-stage algorithm decouples the basis construction from the power allocation, significantly reducing the computational burden.
    Let $I_{\mathrm{out}}$ denote the number of outer iterations, $I_{\mathrm{fp}}$ the average fixed-point iterations per basis vector in Stage 1, and $I_{\mathrm{pa}}$ the iterations for the power allocation subproblem in Stage 2. 
In Stage 1, constructing $N_s$ basis vectors is dominated by matrix operations of order $O(n_t^3)$, yielding $O(I_{\mathrm{fp}} N_s n_t^3)$. 
In Stage 2, the power allocation involves $N_s$ scalar variables with complexity $O(I_{\mathrm{pa}} N_s^{3.5})$. 
Crucially, owing to the warm-start strategy in which the solver is initialized with the solution from the previous iteration, these iteration counts ($I_{\mathrm{out}}, I_{\mathrm{fp}}, I_{\mathrm{pa}}$) remain small in practice.
Consequently, the total complexity scales as $O(I_{\mathrm{out}}(I_{\mathrm{fp}} N_s n_t^3 + I_{\mathrm{pa}} N_s^{3.5}))$, which is a low-order polynomial in $n_t$. 
In contrast, the SCA-SDR-based precoding \cite{li:tvt:25} requires solving a semidefinite program with an $n_t \times n_t$ matrix variable at each of its $I_{\mathrm{sdp}}$ iterations. This incurs a per-iteration complexity of $O(n_t^{6.5})$ \cite{luo:spm:10}, leading to a total complexity of $O(I_{\mathrm{sdp}} n_t^{6.5})$.
This prohibitive polynomial scaling with respect to $n_t$ makes the SDP-based approach computationally infeasible for large antenna arrays, whereas the proposed method maintains scalability.

\begin{figure}[!t]
  \centering
  \includegraphics[width=\columnwidth]{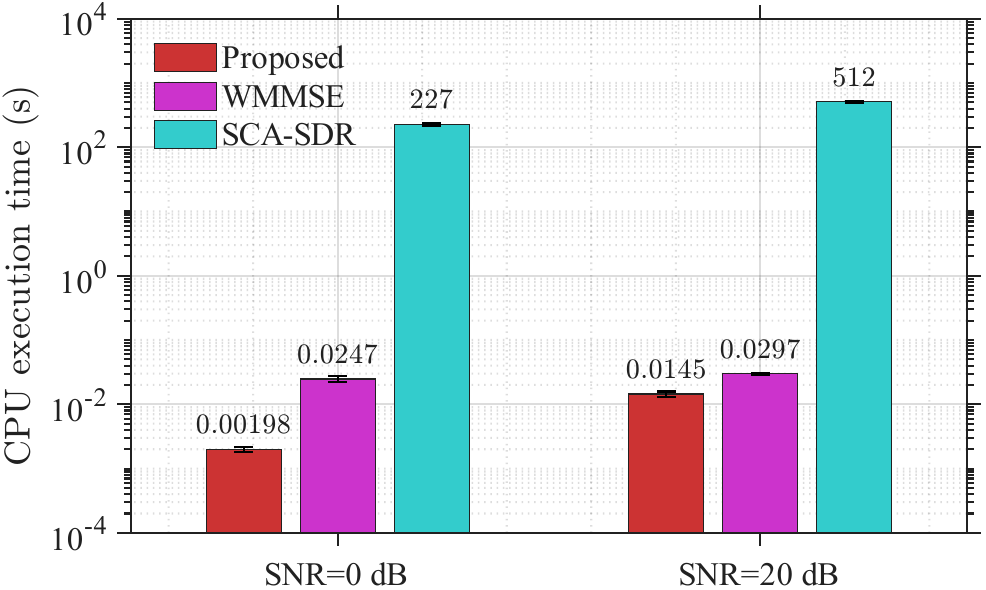}
  \caption{Average CPU execution time at 0 dB and 20 dB SNR ($n_t=16$).}
  \label{fig:cpu_time}
\end{figure}

To empirically verify the computational complexity analysis, we measure the average CPU execution time of the algorithms. 
It is worth noting that while CPU time is implementation-dependent and not an absolute measure of complexity, it serves as a useful indirect metric for gauging the relative computational burden of different methods. 
Fig.~\ref{fig:cpu_time} reports the average execution time required to obtain a single solution point corresponding to the Pareto boundaries shown in Figs.~\ref{fig:pareto_0db} and \ref{fig:pareto_20db} (where $n_t=16$).

The results reveal a dramatic contrast in runtime efficiency. 
At 0 dB SNR, the SCA-SDR-based precoding requires approximately 227 seconds to converge, whereas the proposed method completes the optimization in merely 1.98 ms. 
This implies that the SCA-SDR-based precoding is on the order of $10^5$ times slower than the proposed method, i.e., the proposed method requires only about $0.001\%$ of the runtime of the SCA-SDR-based precoding.
Similarly, at 20 dB SNR, while the proposed method takes slightly longer (14.5 ms) due to the increased number of streams, the SCA-SDR-based precoding takes over 512 seconds, maintaining a speed gap of more than four orders of magnitude. 
This indicates that the proposed method requires only about $0.003\%$ of the runtime of the SCA-SDR-based precoding.
The prohibitive computational cost of the SCA-SDR approach stems from the need to solve high-dimensional SDP subproblems with lifted variables at every iteration. 
In contrast, the proposed method maintains extremely low computational cost by relying on efficient matrix-vector operations, confirming its high scalability and suitability for practical implementation in secure ISAC systems.

\section{Conclusion} \label{sec:conclusion}
In this work, we investigated the fundamental performance limits of secure ISAC by introducing and analyzing the MIMO-ME-MS channel. 
By adopting a unified information-theoretic framework based on SMI, we formulated the joint design of the secure communication and sensing precoder as a weighted sum rate maximization problem. 
This formulation captures the intrinsic tradeoffs among the three competing objectives. 
Our primary theoretical contribution is a comprehensive high-SNR analysis based on a fundamental decomposition of the transmit space into eight subspaces. 
This analysis allowed us to explicitly characterize the maximum achievable weighted DoF and identify the structure of a quasi-optimal precoder. 
A key insight is that the optimal precoder must exclusively span ``useful subspaces'' composed of all DoF-positive dimensions, the composition of which depends critically on the system weights assigned to secrecy and sensing. 
This characterization revealed the inadequacy of directly extending known schemes from the simpler MIMO-ME or MIMO-MS subproblems. 

Building on these structural insights, we proposed a practical precoding method. To address the nonconvex nature of the precoder design problem, we developed a principled two-stage algorithm. 
The algorithm alternates between a basis construction stage that sequentially identifies orthogonal vectors to maximize marginal rate gain and a power allocation stage. The power allocation stage solves the resulting DC program via SCA. 
The numerical simulations demonstrated that the proposed precoder achieves substantial gains in the MIMO-ME-MS channel. 
These gains stem from its capability to strike a balance among the conflicting objectives associated with communication, secrecy, and sensing. 
Through these contributions, we established a theoretical and algorithmic foundation for the MIMO-ME-MS channel. 

This work also suggests several directions for future research. 
First, while we assumed perfect knowledge of the involved channels to obtain a clean high-SNR characterization, practical systems typically operate with imperfect channel estimates. In particular, the eavesdropper's CSI is often uncertain or even unavailable \cite{ren:tcom:23}. 
Under channel uncertainty, the transmit-side row/null spaces that underpin our derivations become perturbed, potentially inducing power leakage into the eavesdropper's effective subspace. 
In the high-SNR regime, such leakage may severely degrade the secrecy DoF unless the estimation error decays sufficiently rapidly with SNR.
A natural direction is robust secure-ISAC precoder design, for example, maximizing a worst-case weighted-SMI objective under bounded or stochastic channel errors. 
Second, we focused on the canonical single-RX/single-eavesdropper/single-target model to isolate the fundamental three-way interaction and to keep the DoF characterization tractable. 
Extending the framework to multiple legitimate users, multiple sensing targets, or multiple eavesdroppers \cite{hiaTcom22} introduces additional challenges due to multi-user interference and diverse sensing criteria. 
Developing scalable algorithms and structural characterizations for these generalizations, while also accounting for hardware constraints such as hybrid beamforming, remains an important direction for future work. 
Finally, it is also interesting to incorporate finite alphabet inputs into the MIMO-ME-MS channel \cite{wu:tcom:17, jin:tsp:17}. This constraint fundamentally changes the structure of the MI expressions from the log-det form, introducing new challenges in characterizing the optimal precoder structure.

\appendices
\section{Proof of Theorem~\ref{thm:mems_decomp}} \label{sec:appen1}

\begin{lemma}\label{lem:subspace_decomp}
Let $\CMcal{U}$ be a subspace of a finite-dimensional inner-product space, and let $\CMcal{A}$ be a subspace of $\CMcal{U}$.
Then,
\begin{align}
    \CMcal{U} = \CMcal{A} \oplus \left(\CMcal{U} \cap \CMcal{A}^\perp\right).
    \label{eq:lemma1}
\end{align}
\end{lemma}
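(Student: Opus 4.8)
The plan is to verify directly the two defining properties of an internal direct sum decomposition: that $\CMcal{A}$ together with $\CMcal{U}\cap\CMcal{A}^\perp$ spans $\CMcal{U}$, and that these two subspaces intersect only in $\mathbf{0}$. Both follow from the standard orthogonal-complement machinery, so this is a short argument; the only thing to state carefully is the ambient space in which $\CMcal{A}^\perp$ is formed.

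First I would dispatch the trivial-intersection claim. If $\mathbf{x}\in\CMcal{A}\cap(\CMcal{U}\cap\CMcal{A}^\perp)$, then $\mathbf{x}\in\CMcal{A}$ and $\mathbf{x}\in\CMcal{A}^\perp$ simultaneously, hence $\|\mathbf{x}\|_2^2=\inner{\mathbf{x}}{\mathbf{x}}=0$, which forces $\mathbf{x}=\mathbf{0}$. Thus $\CMcal{A}\cap(\CMcal{U}\cap\CMcal{A}^\perp)=\{\mathbf{0}\}$, so the sum in \eqref{eq:lemma1}, if it equals $\CMcal{U}$, is necessarily a direct sum.

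Next I would establish $\CMcal{U}=\CMcal{A}+(\CMcal{U}\cap\CMcal{A}^\perp)$. The inclusion $\supseteq$ is immediate: $\CMcal{A}\subseteq\CMcal{U}$ by hypothesis, $\CMcal{U}\cap\CMcal{A}^\perp\subseteq\CMcal{U}$ trivially, and $\CMcal{U}$ is closed under addition. For $\subseteq$, fix $\mathbf{u}\in\CMcal{U}$. Since the ambient space is finite-dimensional, the orthogonal projector $\mathcal{P}_{\CMcal{A}}$ is well-defined, and we may write $\mathbf{u}=\mathcal{P}_{\CMcal{A}}\mathbf{u}+(\mathbf{u}-\mathcal{P}_{\CMcal{A}}\mathbf{u})$ with $\mathcal{P}_{\CMcal{A}}\mathbf{u}\in\CMcal{A}$ and $\mathbf{u}-\mathcal{P}_{\CMcal{A}}\mathbf{u}\in\CMcal{A}^\perp$. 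It remains only to note $\mathbf{u}-\mathcal{P}_{\CMcal{A}}\mathbf{u}\in\CMcal{U}$: indeed $\mathbf{u}\in\CMcal{U}$ and $\mathcal{P}_{\CMcal{A}}\mathbf{u}\in\CMcal{A}\subseteq\CMcal{U}$, so their difference lies in $\CMcal{U}$. Hence $\mathbf{u}-\mathcal{P}_{\CMcal{A}}\mathbf{u}\in\CMcal{U}\cap\CMcal{A}^\perp$ and $\mathbf{u}$ decomposes as required. Combining the two parts gives $\CMcal{U}=\CMcal{A}\oplus(\CMcal{U}\cap\CMcal{A}^\perp)$.

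There is no real obstacle here; the one subtlety worth flagging is that $\CMcal{A}^\perp$ must be read as the orthogonal complement in the ambient inner-product space (equivalently, one can take the complement relative to $\CMcal{U}$, since $\CMcal{U}\cap\CMcal{A}^\perp$ is exactly the orthogonal complement of $\CMcal{A}$ within $\CMcal{U}$), and that finite-dimensionality is what guarantees the projector $\mathcal{P}_{\CMcal{A}}$ exists so that the splitting $\mathbf{u}=\mathcal{P}_{\CMcal{A}}\mathbf{u}+(\mathbf{u}-\mathcal{P}_{\CMcal{A}}\mathbf{u})$ is legitimate.
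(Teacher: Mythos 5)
Your proof is correct and follows essentially the same route as the paper's: both decompose $\mathbf{u}\in\CMcal{U}$ via the orthogonal projector onto $\CMcal{A}$ and observe that the residual $\mathbf{u}-\mathcal{P}_{\CMcal{A}}\mathbf{u}$ lies in $\CMcal{U}\cap\CMcal{A}^\perp$. You merely make explicit the trivial-intersection step that the paper compresses into its final sentence, so there is nothing to correct.
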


\begin{IEEEproof}
    For any $\mathbf{u}\in\CMcal{U}$, the projection theorem yields the orthogonal decomposition $\mathbf{u}=\mathcal{P}_{\CMcal{A}}\mathbf{u}+\mathcal{P}_{\CMcal{A}^\perp}\mathbf{u}$, where the operator $\mathcal{P}_{\CMcal{S}}$ denotes the orthogonal projection onto the subspace $\CMcal{S}$. 
    Since $\mathcal{P}_{\CMcal{A}}\mathbf{u}\in\CMcal{A}\subseteq\CMcal{U}$ and $\mathbf{u}\in\CMcal{U}$, we have
    $\mathcal{P}_{\CMcal{A}^\perp}\mathbf{u}=\mathbf{u}-\mathcal{P}_{\CMcal{A}}\mathbf{u}\in\CMcal{U}$.
    Moreover, $\mathcal{P}_{\CMcal{A}^\perp}\mathbf{u}\in\CMcal{A}^\perp$, hence
    $\mathcal{P}_{\CMcal{A}^\perp}\mathbf{u}\in\CMcal{U}\cap\CMcal{A}^\perp$.
    Orthogonality and uniqueness follow from $\CMcal{A}\perp(\CMcal{U}\cap\CMcal{A}^\perp)$ and finite dimensionality.
\end{IEEEproof}
Now, we prove Theorem~\ref{thm:mems_decomp}.

\begin{IEEEproof}[Proof of Theorem~\ref{thm:mems_decomp}]
\textbf{Step 1} {(Decomposition of the null space $\CMcal{N}_e$)}:
We first focus on the null space of the eavesdropper, $\CMcal{N}_e$. Applying Lemma~\ref{lem:subspace_decomp} with $\CMcal{U} = \CMcal{N}_c \cap \CMcal{N}_e$ and $\CMcal{A} = \CMcal{V}_n$ yields
\begin{align}
    \CMcal{N}_c \cap \CMcal{N}_e &= \CMcal{V}_n \oplus \left(\CMcal{N}_c \cap \CMcal{N}_e \cap \CMcal{V}_n^\perp\right) = \CMcal{V}_n \oplus \CMcal{V}_s.
\end{align}
Similarly, applying Lemma~\ref{lem:subspace_decomp} with $\CMcal{U} = \CMcal{N}_s \cap \CMcal{N}_e$ and $\CMcal{A} = \CMcal{V}_n$ gives
\begin{align}
    \CMcal{N}_s \cap \CMcal{N}_e &= \CMcal{V}_n \oplus \left(\CMcal{N}_s \cap \CMcal{N}_e \cap \CMcal{V}_n^\perp\right) = \CMcal{V}_n \oplus \CMcal{V}_c.
\end{align}
To verify that $\CMcal{V}_n \oplus \CMcal{V}_c \oplus \CMcal{V}_s$ forms a direct sum, assume $\mathbf{x}_n + \mathbf{x}_c + \mathbf{x}_s = \mathbf{0}$ with $\mathbf{x}_j \in \CMcal{V}_j$. Since $\mathbf{x}_c, \mathbf{x}_s \in \CMcal{V}_n^\perp$, we have $\mathbf{x}_n = -(\mathbf{x}_c + \mathbf{x}_s) \in \CMcal{V}_n^\perp$, implying $\mathbf{x}_n = \mathbf{0}$. Then $\mathbf{x}_c = -\mathbf{x}_s$. Since $\mathbf{x}_c \in \CMcal{N}_s \cap \CMcal{N}_e$ and $-\mathbf{x}_s \in \CMcal{N}_c \cap \CMcal{N}_e$, it follows that $\mathbf{x}_c \in \CMcal{N}_c \cap \CMcal{N}_s \cap \CMcal{N}_e = \CMcal{V}_n$. But $\mathbf{x}_c \in \CMcal{V}_n^\perp$, so $\mathbf{x}_c = \mathbf{0}$, and consequently $\mathbf{x}_s = \mathbf{0}$.

Now, applying Lemma~\ref{lem:subspace_decomp} to the entire $\CMcal{N}_e$ with $\CMcal{A} = \CMcal{V}_n \oplus \CMcal{V}_c \oplus \CMcal{V}_s$, we obtain
\begin{align}
    \CMcal{N}_e = (\CMcal{V}_n \oplus \CMcal{V}_c \oplus \CMcal{V}_s) \oplus \left(\CMcal{N}_e \cap (\CMcal{V}_n \oplus \CMcal{V}_c \oplus \CMcal{V}_s)^\perp\right).
\end{align}
By the definition in Table~\ref{tab:subspace_dof}, the second term is precisely $\CMcal{V}_{cs}$, yielding the direct sum
\begin{align}
    \CMcal{N}_e = \CMcal{V}_n \oplus \CMcal{V}_c \oplus \CMcal{V}_s \oplus \CMcal{V}_{cs}. \label{eq:Ne_decomp}
\end{align}

\textbf{Step 2} {(Decomposition of the row space $\CMcal{R}_e$)}:
Next, we decompose the row space $\CMcal{R}_e$ using an identical structural approach. Applying Lemma~\ref{lem:subspace_decomp} with $\CMcal{U} = \CMcal{R}_c \cap \CMcal{R}_e$ and $\CMcal{A} = \CMcal{V}_{cse}$ gives
\begin{align}
    \CMcal{R}_c \cap \CMcal{R}_e &= \CMcal{V}_{cse} \oplus \left(\CMcal{R}_c \cap \CMcal{R}_e \cap \CMcal{V}_{cse}^\perp\right) = \CMcal{V}_{cse} \oplus \CMcal{V}_{ce}.
\end{align}
Similarly, for $\CMcal{U} = \CMcal{R}_s \cap \CMcal{R}_e$ and $\CMcal{A} = \CMcal{V}_{cse}$, we have
\begin{align}
    \CMcal{R}_s \cap \CMcal{R}_e &= \CMcal{V}_{cse} \oplus \left(\CMcal{R}_s \cap \CMcal{R}_e \cap \CMcal{V}_{cse}^\perp\right) = \CMcal{V}_{cse} \oplus \CMcal{V}_{se}.
\end{align}
To verify the direct sum $\CMcal{V}_{cse} \oplus \CMcal{V}_{ce} \oplus \CMcal{V}_{se}$, assume $\mathbf{x}_{cse} + \mathbf{x}_{ce} + \mathbf{x}_{se} = \mathbf{0}$. Since $\mathbf{x}_{ce}, \mathbf{x}_{se} \in \CMcal{V}_{cse}^\perp$, we get $\mathbf{x}_{cse} = -(\mathbf{x}_{ce} + \mathbf{x}_{se}) \in \CMcal{V}_{cse}^\perp$, so $\mathbf{x}_{cse} = \mathbf{0}$. Then $\mathbf{x}_{ce} = -\mathbf{x}_{se}$. This implies $\mathbf{x}_{ce} \in (\CMcal{R}_c \cap \CMcal{R}_e) \cap (\CMcal{R}_s \cap \CMcal{R}_e) = \CMcal{V}_{cse}$. Because $\mathbf{x}_{ce} \in \CMcal{V}_{cse}^\perp$, we conclude $\mathbf{x}_{ce} = \mathbf{0}$ and $\mathbf{x}_{se} = \mathbf{0}$.

Applying Lemma~\ref{lem:subspace_decomp} to the entire $\CMcal{R}_e$ with $\CMcal{A} = \CMcal{V}_{cse} \oplus \CMcal{V}_{ce} \oplus \CMcal{V}_{se}$ yields
\begin{align}
    \CMcal{R}_e = (\CMcal{V}_{cse} \oplus \CMcal{V}_{ce} \oplus \CMcal{V}_{se}) \oplus \left(\CMcal{R}_e \cap (\CMcal{V}_{cse} \oplus \CMcal{V}_{ce} \oplus \CMcal{V}_{se})^\perp\right).
\end{align}
The second term is exactly $\CMcal{V}_e$, which provides the direct sum
\begin{align}
    \CMcal{R}_e = \CMcal{V}_{cse} \oplus \CMcal{V}_{ce} \oplus \CMcal{V}_{se} \oplus \CMcal{V}_e. \label{eq:Re_decomp}
\end{align}

\textbf{Step 3} {(Combination of null and row spaces)}:
Since $\CMcal{R}_e$ is the orthogonal complement of $\CMcal{N}_e$, the entire vector space can be written as the orthogonal direct sum $\CMcal{N}_e \oplus \CMcal{R}_e$. Substituting the decompositions from \eqref{eq:Ne_decomp} and \eqref{eq:Re_decomp} yields the complete 8-subspace direct sum 
\begin{align}
    \mathbb{C}^{n_t} = \CMcal{N}_e \oplus \CMcal{R}_e = \bigoplus_{j\in\mathcal{K}} \CMcal{V}_j, \label{eq:full_decomp}
\end{align}
which proves \eqref{eq:mems_spaces}. 
\end{IEEEproof}

\section{Proof of Theorem~\ref{thm:dof_max}} \label{sec:appen2}

\begin{lemma} \label{lem:2}
The weighted DoF of a quasi-optimal precoder, $d(\mathbf{F}_{\text{q-opt}})$, is upper-bounded by the maximum value of the rank-based expression, assuming fixed precoding basis $\mathbf{W}$:
\begin{align}
\hspace{-0.2em}d(\mathbf{F}_{\text{q-opt}})\hspace{-0.1em} \le \hspace{-0.1em}\max_{\mathbf{F}} \hspace{-0.1em}\left( w_c \rank(\mathbf{H}_c\mathbf{F}) \hspace{-0.1em}-\hspace{-0.1em} w_c \rank(\mathbf{H}_e\mathbf{F}) \hspace{-0.1em}+\hspace{-0.1em} w_s \rank(\mathbf{H}_s\mathbf{F}) \right).
\end{align}
\end{lemma}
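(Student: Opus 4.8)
The plan is to reduce the DoF-based quantity $d(\mathbf{F}_{\text{q-opt}})$ to the purely combinatorial rank expression by showing that, for a precoder with a fixed basis and non-degenerate power scaling, each mutual-information term satisfies $R_i(\mathbf{F};P) = \rank(\mathbf{H}_i\mathbf{F})\log_2 P + O(1)$. First I would establish the universal upper bound. Writing $\mathbf{M}_i \triangleq \mathbf{F}^H\mathbf{H}_i^H\mathbf{H}_i\mathbf{F}\succeq\mathbf{0}$, we have $\rank(\mathbf{M}_i)=\rank(\mathbf{H}_i\mathbf{F})=:r_i$, and every nonzero eigenvalue $\lambda_j$ of $\mathbf{M}_i$ obeys $\lambda_j\le\mathrm{tr}(\mathbf{M}_i)=\mathrm{tr}(\mathbf{H}_i\mathbf{F}\mathbf{F}^H\mathbf{H}_i^H)\le\|\mathbf{H}_i^H\mathbf{H}_i\|\,\mathrm{tr}(\mathbf{F}\mathbf{F}^H)\le\|\mathbf{H}_i^H\mathbf{H}_i\|\,P$. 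Hence $R_i(\mathbf{F};P)=\sum_{j=1}^{r_i}\log_2(1+\lambda_j)\le r_i\log_2(1+\|\mathbf{H}_i^H\mathbf{H}_i\|\,P)=r_i\log_2 P+O(1)$, which yields $d_i(\mathbf{F})\le\rank(\mathbf{H}_i\mathbf{F})$ for every $i\in\{c,e,s\}$.

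Next I would establish the matching lower bound for the quasi-optimal precoder. Since $\mathbf{F}_{\text{q-opt}}=\mathbf{W}\mathbf{P}^{1/2}$ has a fixed semi-unitary basis $\mathbf{W}$ and, by the non-degeneracy of its power allocation (a stream carrying only $o(P)$ power contributes zero to every $d_i$ and may be pruned without loss of generality), every active per-stream power grows as $\Theta(P)$, we may write $\mathbf{P}=P\,\mathbf{P}_0$ with $\mathbf{P}_0\succ\mathbf{0}$ fixed on its active coordinates. Then $\mathbf{M}_i=P\,\mathbf{N}_i$ with $\mathbf{N}_i\triangleq\mathbf{P}_0^{1/2}\mathbf{W}^H\mathbf{H}_i^H\mathbf{H}_i\mathbf{W}\mathbf{P}_0^{1/2}$ a fixed positive-semidefinite matrix of rank $r_i=\rank(\mathbf{H}_i\mathbf{F}_{\text{q-opt}})$ with fixed positive nonzero eigenvalues $\mu_1,\dots,\mu_{r_i}$. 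Consequently $R_i(\mathbf{F}_{\text{q-opt}};P)=\sum_{j=1}^{r_i}\log_2(1+P\mu_j)=r_i\log_2 P+O(1)$, so $d_i(\mathbf{F}_{\text{q-opt}})=\rank(\mathbf{H}_i\mathbf{F}_{\text{q-opt}})$ exactly. Substituting into the weighted DoF and bounding by the maximum over $\mathbf{F}$ gives
\begin{align}
d(\mathbf{F}_{\text{q-opt}}) &= w_c\rank(\mathbf{H}_c\mathbf{F}_{\text{q-opt}})-w_c\rank(\mathbf{H}_e\mathbf{F}_{\text{q-opt}})+w_s\rank(\mathbf{H}_s\mathbf{F}_{\text{q-opt}}) \nonumber \\
&\le \max_{\mathbf{F}}\big(w_c\rank(\mathbf{H}_c\mathbf{F})-w_c\rank(\mathbf{H}_e\mathbf{F})+w_s\rank(\mathbf{H}_s\mathbf{F})\big),
\end{align}
which is the desired bound.

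The step I expect to be the main obstacle is the eavesdropper term. Because $\rank(\mathbf{H}_e\mathbf{F})$ enters with a minus sign, an upper bound on $R_e$ is of no use; the two-sided estimate $R_e(\mathbf{F}_{\text{q-opt}};P)=\rank(\mathbf{H}_e\mathbf{F}_{\text{q-opt}})\log_2 P+O(1)$ is required, and this is exactly where the non-degeneracy of the quasi-optimal power allocation (all active singular values of order $\sqrt{P}$) is indispensable: for a generic precoder the eavesdropper's effective Gram matrix can be arbitrarily ill-conditioned, so $R_e$ may grow strictly slower than $\rank(\mathbf{H}_e\mathbf{F})\log_2 P$. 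The pruning/saturation reduction on $\mathbf{F}_{\text{q-opt}}$ is what makes the non-degeneracy hypothesis hold without loss of generality and keeps the $O(1)$ remainders genuinely bounded.
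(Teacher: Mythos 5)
Your proposal reproduces the paper's second step in a cleaner, more explicit way: for a precoder whose active streams all carry $\Theta(P)$ power, your eigenvalue computation correctly gives the two-sided estimate $R_i(\mathbf{F};P)=\rank(\mathbf{H}_i\mathbf{F})\log_2 P+O(1)$ for $i\in\{c,e,s\}$, hence $d(\mathbf{F})$ equals the rank expression, which is trivially at most its maximum over $\mathbf{F}$; and you correctly identify that the eavesdropper term, entering with a negative sign, is the reason a two-sided estimate is needed. The genuine gap is in the reduction that precedes this. You restrict to the $\Theta(P)$-power case by asserting that ``a stream carrying only $o(P)$ power contributes zero to every $d_i$ and may be pruned without loss of generality.'' That assertion is false: a stream with power $p_n\sim P^{\alpha}$, $0<\alpha<1$, contributes $\alpha$ (not zero) to the pre-log of every channel that sees it, so pruning it changes $d(\mathbf{F})$. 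Equivalently, you are importing non-degeneracy (all singular values of order $\sqrt{P_{\mathrm{tot}}}$) as a hypothesis on $\mathbf{F}_{\text{q-opt}}$, but in Lemma~\ref{lem:2} the quasi-optimal precoder is defined only as a maximizer of the weighted DoF; non-degeneracy appears in the paper only as part of the \emph{sufficient} conditions used for the achievability direction (Proposition 1), not as a property one may assume in the converse. Since Lemma~\ref{lem:2} is what feeds the proof of Theorem~\ref{thm:dof_max}, which must hold for arbitrary precoders, intermediate power-scaling exponents have to be ruled out, not assumed away.

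The paper closes exactly this hole with a step your proof is missing: using the sequential decomposition of Proposition~\ref{prop:greedy} with the basis fixed, the DoF contribution of the $n$-th column is linear in its power-scaling exponent $\alpha$ (the effective gain $\mathbf{w}_n^H\mathbf{G}^{(i)}_{n-1}\mathbf{w}_n$ does not depend on $p_n$), so a DoF-maximizing power profile can be taken binary, $\alpha\in\{0,1\}$, with constant-power columns DoF-equivalent to zero power. Only after this ``binary power profile'' reduction does the rank formula $d(\mathbf{F})=w_c\rank(\mathbf{H}_c\mathbf{F})-w_c\rank(\mathbf{H}_e\mathbf{F})+w_s\rank(\mathbf{H}_s\mathbf{F})$ legitimately apply to the quasi-optimal value, at which point your computation (and your final ``evaluate at $\mathbf{F}_{\text{q-opt}}$, then bound by the maximum'' step, which coincides with the paper's) goes through. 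To repair your argument, replace the pruning claim by this per-column linearity-in-$\alpha$ argument, showing that the weighted DoF of any precoder is upper-bounded by that of a binary-profile precoder with the same basis.
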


\begin{IEEEproof}
To find an upper bound on the weighted DoF, we relax the total power constraint and analyze the optimal power allocation for each column of a precoder $\mathbf{F}$.

First, we show that a DoF-optimal power profile must be binary. As established in Proposition~\ref{prop:greedy} with fixed basis $\mathbf{W}$, the marginal rate gain from the $n$-th column with power $p_n$ is a sum of logarithmic terms of the form $\log(1+p_n \mathbf{w}_n^H \mathbf{G}_{n-1} \mathbf{w}_n)$. The DoF contribution from this column is therefore linear with respect to its power scaling exponent $\alpha$ (where $p_n \sim (P_{\mathrm{tot}})^\alpha$), as the effective gain term $\mathbf{w}_n^H \mathbf{G}_{n-1} \mathbf{w}_n$ is independent of $p_n$. This linearity implies that any intermediate power scaling ($0 < \alpha < 1$) is suboptimal for DoF maximization. Thus, each column's power must scale as either $O(P_{\mathrm{tot}})$ (for $\alpha=1$) or as a constant (for $\alpha=0$). Since a constant power allocation yields zero DoF, it is equivalent to zero power from a DoF perspective.

Second, based on the above, we only need to consider precoders where each column is allocated either $O(P_{\mathrm{tot}})$ power or zero power to maximize the DoF. For any such precoder $\mathbf{F}$, its weighted DoF is precisely given by:
\begin{align}
    d(\mathbf{F}) = w_c \rank(\mathbf{H}_c\mathbf{F}) - w_c \rank(\mathbf{H}_e\mathbf{F}) + w_s \rank(\mathbf{H}_s\mathbf{F}). \label{eq:dof_formula_appendix}
\end{align}
The weighted DoF of a quasi-optimal precoder, $d(\mathbf{F}_{\text{q-opt}})$, must be equal to the value of \eqref{eq:dof_formula_appendix} for some specific choice of $\mathbf{F}$. This value is necessarily less than or equal to the maximum possible value of the expression over all choices of $\mathbf{F}$.
\end{IEEEproof}

Now, we prove the main theorem.

\begin{IEEEproof}[Proof of Theorem~\ref{thm:dof_max}]
For any precoder $\mathbf{F}$ and positive weights $w_c, w_s > 0$, we show that the weighted sum of ranks is upper-bounded by:
\begin{align} \label{prop:weighted_inequality}
& w_c \rank(\mathbf{H}_c \mathbf{F}) - w_c \rank(\mathbf{H}_e \mathbf{F}) + w_s \rank(\mathbf{H}_s \mathbf{F}) \nonumber \\ 
&\le w_c \rank(\mathbf{F}_c) + w_s \rank(\mathbf{F}_s) + (w_c+w_s)\rank(\mathbf{F}_{cs})\nonumber \\
&\quad + w_s \rank(\mathbf{F}_{cse}) + w_s \rank(\mathbf{F}_{se}) \nonumber \\
&\quad - \min\{w_c,w_s\}[\rank(\mathbf{F}_{se})-\rank(\mathbf{F}_{ce})]^+.
\end{align}

We begin by expressing $\rank(\mathbf{H}_i\mathbf{F})$ ($i\in\{c,s,e\}$) using the direct sum structure in Theorem~\ref{thm:mems_decomp}. 
Since $\{\CMcal{V}_j\}_{j\in\mathcal{K}}$ forms a direct sum decomposition of $\mathbb{C}^{n_t}$, there exist full-column-rank basis matrices $\mathbf{U}_j\in\mathbb{C}^{n_t\times k_j}$ with $\CMcal{C}(\mathbf{U}_j)=\CMcal{V}_j$ such that the concatenation
\begin{align}
    \mathbf{U} \triangleq [\mathbf{U}_n, \mathbf{U}_c, \mathbf{U}_s, \mathbf{U}_e, \mathbf{U}_{cs}, \mathbf{U}_{ce}, \mathbf{U}_{se}, \mathbf{U}_{cse}]
\end{align}
is nonsingular.
Hence, any precoder can be written uniquely as $\mathbf{F} = \sum_{j\in\mathcal{K}} \mathbf{U}_j \mathbf{G}_j$ for some coefficient blocks $\mathbf{G}_j\in\mathbb{C}^{k_j\times N_s}$. Define $\mathbf{F}_j \triangleq \mathbf{U}_j\mathbf{G}_j$, so that $\CMcal{C}(\mathbf{F}_j)\subseteq \CMcal{V}_j$ and $\rank(\mathbf{F}_j)=\rank(\mathbf{G}_j)$.

Next, based on the subspace definitions in Table~\ref{tab:subspace_dof} and their direct sum structure, the entire vector space can be partitioned into the null space $\CMcal{N}_i$ and the row space $\CMcal{R}_i$ for each channel $i \in \{c,s,e\}$. Consequently, the only components that lie in the respective row spaces are:
\begin{itemize}
    \item $\mathbf{H}_c$ sees $\{\CMcal{V}_c, \CMcal{V}_{cs}, \CMcal{V}_{ce}, \CMcal{V}_{cse}\}$,
    \item $\mathbf{H}_s$ sees $\{\CMcal{V}_s, \CMcal{V}_{cs}, \CMcal{V}_{se}, \CMcal{V}_{cse}\}$,
    \item $\mathbf{H}_e$ sees $\{\CMcal{V}_e, \CMcal{V}_{ce}, \CMcal{V}_{se}, \CMcal{V}_{cse}\}$.
\end{itemize}
Moreover, the restriction of $\mathbf{H}_c$ to $\CMcal{V}_c\oplus \CMcal{V}_{cs}\oplus \CMcal{V}_{ce}\oplus \CMcal{V}_{cse}$ is injective, so the rank of each effective channel output is equivalent to the rank of the vertically stacked coefficient blocks visible to that receiver.

Using this rank equivalence and the subadditivity of rank, we obtain
\begin{align}
\rank(\mathbf{H}_c\mathbf{F})
&\le \rank\left(\begin{bmatrix}\mathbf{G}_c \\ \mathbf{G}_{cs}\end{bmatrix}\right) + \rank\left(\begin{bmatrix}\mathbf{G}_{ce} \\ \mathbf{G}_{cse}\end{bmatrix}\right), \label{eq:rank_Hc_bound} \\
\rank(\mathbf{H}_s\mathbf{F})
&\le \rank\left(\begin{bmatrix}\mathbf{G}_s \\ \mathbf{G}_{cs}\end{bmatrix}\right) + \rank\left(\begin{bmatrix}\mathbf{G}_{se} \\ \mathbf{G}_{cse}\end{bmatrix}\right). \label{eq:rank_Hs_bound}
\end{align}
For the eavesdropper, adding blocks does not decrease the rank, so
\begin{align} \label{eq:rank_He_lower}
\rank(\mathbf{H}_e\mathbf{F})
=\rank\left(\begin{bmatrix}\mathbf{G}_e \\ \mathbf{G}_{ce} \\ \mathbf{G}_{se} \\ \mathbf{G}_{cse}\end{bmatrix}\right)
\ge \rank\left(\begin{bmatrix}\mathbf{G}_{ce} \\ \mathbf{G}_{se} \\ \mathbf{G}_{cse}\end{bmatrix}\right).
\end{align}

Substituting \eqref{eq:rank_Hc_bound}--\eqref{eq:rank_He_lower} into the objective and applying subadditivity $\rank([\mathbf{G}_i; \mathbf{G}_{cs}]) \le \rank(\mathbf{G}_i) + \rank(\mathbf{G}_{cs})$ yields
\begin{align} \label{eq:weighted_rank_sub}
&w_c\rank(\mathbf{H}_c\mathbf{F})-w_c\rank(\mathbf{H}_e\mathbf{F})+w_s\rank(\mathbf{H}_s\mathbf{F}) \nonumber \\
&\le w_c\rank(\mathbf{G}_c)+w_s\rank(\mathbf{G}_s)+(w_c+w_s)\rank(\mathbf{G}_{cs}) + \Gamma,
\end{align}
where $\Gamma$ captures the interaction between the shared blocks:
\begin{align}
\Gamma \hspace{-0.1em} \triangleq \hspace{-0.1em}&~ w_c\rank\left(\begin{bmatrix}\mathbf{G}_{ce} \\ \mathbf{G}_{cse}\end{bmatrix}\right) \hspace{-0.1em}+\hspace{-0.1em} w_s\rank\left(\begin{bmatrix}\mathbf{G}_{se} \\ \mathbf{G}_{cse}\end{bmatrix}\right) \hspace{-0.1em}-\hspace{-0.1em} w_c\rank\left(\begin{bmatrix}\mathbf{G}_{ce} \\ \mathbf{G}_{se} \\ \mathbf{G}_{cse}\end{bmatrix}\right).
\end{align}

We bound $\Gamma$ by considering two cases based on the weights $w_c$ and $w_s$. 

\medskip
\noindent\textit{Case 1: $w_s \ge w_c$.}
Using the lower bound $\rank([\mathbf{G}_{ce}; \mathbf{G}_{se}; \mathbf{G}_{cse}]) \ge \rank([\mathbf{G}_{se}; \mathbf{G}_{cse}])$, we have
\begin{align}
\Gamma &\le w_c\rank\left(\begin{bmatrix}\mathbf{G}_{ce} \\ \mathbf{G}_{cse}\end{bmatrix}\right) + (w_s-w_c)\rank\left(\begin{bmatrix}\mathbf{G}_{se} \\ \mathbf{G}_{cse}\end{bmatrix}\right).
\end{align}
Since $w_s - w_c \ge 0$, we can apply subadditivity $\rank([\mathbf{A}; \mathbf{B}]) \le \rank(\mathbf{A}) + \rank(\mathbf{B})$ to both terms to obtain an upper bound:
\begin{align}
\Gamma \le w_c\rank(\mathbf{G}_{ce}) + (w_s-w_c)\rank(\mathbf{G}_{se}) + w_s\rank(\mathbf{G}_{cse}). \label{eq:gamma_case1_a}
\end{align}
Alternatively, using the other lower bound $\rank([\mathbf{G}_{ce}; \mathbf{G}_{se}; \mathbf{G}_{cse}]) \ge \rank([\mathbf{G}_{ce}; \mathbf{G}_{cse}])$, the $-w_c$ terms cancel out, leaving
\begin{align}
\Gamma &\le w_s\rank\left(\begin{bmatrix}\mathbf{G}_{se} \\ \mathbf{G}_{cse}\end{bmatrix}\right) \le w_s\rank(\mathbf{G}_{se}) + w_s\rank(\mathbf{G}_{cse}). \label{eq:gamma_case1_b}
\end{align}
Combining \eqref{eq:gamma_case1_a} and \eqref{eq:gamma_case1_b} implies that $\Gamma$ is bounded by their minimum:
\begin{align}
\Gamma &\le \min\{ w_c\rank(\mathbf{G}_{ce}) + (w_s-w_c)\rank(\mathbf{G}_{se}), w_s\rank(\mathbf{G}_{se}) \} \nonumber \\
&\quad + w_s\rank(\mathbf{G}_{cse}) \nonumber \\
&= w_s\rank(\mathbf{G}_{se}) - w_c[\rank(\mathbf{G}_{se})-\rank(\mathbf{G}_{ce})]^+ \nonumber \\
&\quad + w_s\rank(\mathbf{G}_{cse}). \label{eq:gamma_case1_final}
\end{align}

\medskip
\noindent\textit{Case 2: $w_c > w_s$.}
To avoid negative coefficients, we split the penalty term as $-w_c(\cdot) = -w_s(\cdot) - (w_c-w_s)(\cdot)$. Applying $\ge \rank([\mathbf{G}_{se}; \mathbf{G}_{cse}])$ to the first part and $\ge \rank([\mathbf{G}_{ce}; \mathbf{G}_{cse}])$ to the second part, we obtain
\begin{align}
\Gamma \le &~ w_c\rank\left(\begin{bmatrix}\mathbf{G}_{ce} \\ \mathbf{G}_{cse}\end{bmatrix}\right) + w_s\rank\left(\begin{bmatrix}\mathbf{G}_{se} \\ \mathbf{G}_{cse}\end{bmatrix}\right) \nonumber \\
&- w_s\rank\left(\begin{bmatrix}\mathbf{G}_{se} \\ \mathbf{G}_{cse}\end{bmatrix}\right) - (w_c-w_s)\rank\left(\begin{bmatrix}\mathbf{G}_{ce} \\ \mathbf{G}_{cse}\end{bmatrix}\right) \nonumber \\
=&~ w_s\rank\left(\begin{bmatrix}\mathbf{G}_{ce} \\ \mathbf{G}_{cse}\end{bmatrix}\right) \le w_s\rank(\mathbf{G}_{ce}) + w_s\rank(\mathbf{G}_{cse}). \label{eq:gamma_case2_a}
\end{align}
Note that the bound \eqref{eq:gamma_case1_b} remains universally valid for all positive weights since we only dropped the non-positive $-w_c\rank([\mathbf{G}_{ce}; \mathbf{G}_{cse}])$ term from a valid upper bound. Therefore, for $w_c > w_s$, combining \eqref{eq:gamma_case1_b} and \eqref{eq:gamma_case2_a} gives
\begin{align}
\Gamma &\le \min\{ w_s\rank(\mathbf{G}_{ce}), w_s\rank(\mathbf{G}_{se}) \} + w_s\rank(\mathbf{G}_{cse}) \nonumber \\
&= w_s\rank(\mathbf{G}_{se}) - w_s[\rank(\mathbf{G}_{se})-\rank(\mathbf{G}_{ce})]^+ \nonumber \\
&\quad + w_s\rank(\mathbf{G}_{cse}). \label{eq:gamma_case2_final}
\end{align}

Combining \eqref{eq:gamma_case1_final} and \eqref{eq:gamma_case2_final} yields the unified expression for any $w_c, w_s > 0$:
\begin{align}
\Gamma &\le w_s\rank(\mathbf{G}_{se}) - \min\{w_c, w_s\}[\rank(\mathbf{G}_{se})-\rank(\mathbf{G}_{ce})]^+ \nonumber \\
&\quad + w_s\rank(\mathbf{G}_{cse}). \label{eq:gamma_final}
\end{align}

Substituting \eqref{eq:gamma_final} into \eqref{eq:weighted_rank_sub} and using $\rank(\mathbf{G}_j)=\rank(\mathbf{F}_j)$ proves \eqref{prop:weighted_inequality}. Finally, note that the right-hand side of \eqref{prop:weighted_inequality} is monotonically non-decreasing with respect to every $\rank(\mathbf{F}_j)$. Specifically, for $x = \rank(\mathbf{F}_{se})$ and $y = \rank(\mathbf{F}_{ce})$, the function $w_s x - \min\{w_c,w_s\}[x-y]^+$ is non-decreasing in both $x$ and $y$, because its slope with respect to $x$ is at least $w_s - \min\{w_c,w_s\} \ge 0$. Therefore, together with Lemma~\ref{lem:2}, substituting the maximum spatial dimensions $\rank(\mathbf{F}_j) \le \dim(\CMcal{V}_j) = k_j$ yields the valid upper bound $d(\mathbf{F}) \le d_{\text{max}}$.
\end{IEEEproof}

\bibliographystyle{IEEEtran}
\bibliography{ref_MEMS}

\end{document}